\documentclass[12pt,a4paper,widepage, reqno, oneside]{amsart}
\usepackage{amssymb}
\usepackage{enumerate}
\usepackage[toc,page]{appendix}
%
%
%

\usepackage{fancyhdr}


\rhead{\thepage}
\cfoot{}

\addtolength\oddsidemargin{-1.5cm}
\addtolength\evensidemargin{-1.5cm}
\addtolength\textwidth{1.5cm}


\usepackage[dvipsone]{color}
\usepackage[dvips]{graphicx}

\usepackage{stmaryrd}
\usepackage[hypertex,linkcolor=blue,colorlinks,hypertexnames=false,plainpages=false]{hyperref}

\newtheorem{thm}{\noindent\bf Theorem}[section]
\newtheorem{prop}[thm]{\noindent \bf Proposition}
\newtheorem{defin}[thm]{\noindent \bf Definition}
\newtheorem{lemma}[thm]{\noindent \bf Lemma}
\newtheorem{corol}[thm]{\noindent \bf Corollary}




\definecolor{blue2}{rgb}{0,0,0.6}
\definecolor{turck}{rgb}{0,0.5,0.5}
\definecolor{sturck}{rgb}{0,0.9,0.8} 
\definecolor{orange}{rgb}{1,0.4,0}
\definecolor{bord}{rgb}{1,0.3,0.3}
\definecolor{violet}{rgb}{0.4,0.2,0.8}
\definecolor{sblue}{rgb}{0.2,0.4,1.}
\definecolor{yellowa}{rgb}{1.,0.8,0}
\definecolor{greena}{rgb}{0.2,0.8,0}
\definecolor{greenb}{rgb}{0,0.5,0.2}
\definecolor{greenc}{rgb}{0.1,0.7,0.3}
\definecolor{greend}{cmyk}{0.7,0,0.3,0}
\definecolor{greene}{cmyk}{0.7,0,0.3,0.1}
\definecolor{greya}{cmyk}{0,0,0,0.1}
\definecolor{greyb}{cmyk}{0,0,0,0.2}
\definecolor{greyc}{cmyk}{0,0,0,0.3}
\definecolor{greyd}{cmyk}{0,0,0,0.4}
\definecolor{greye}{cmyk}{0,0,0,0.5}
\definecolor{greyf}{cmyk}{0,0,0,0.6}
\definecolor{greyg}{cmyk}{0,0,0,0.7}
\definecolor{greyh}{cmyk}{0,0,0,0.8}
\definecolor{greyi}{cmyk}{0,0,0,0.9}





\newcommand{\linetwo}[2]{{\substack{#1 \\ #2}}} 
%
\newcommand{\sumtwo}[2]{\sum_{\substack{#1 \\ #2}}} 

\newcommand{\nn}{\nonumber}

\newcommand{\C}{\varpi}

\newcommand{\dis}{\displaystyle}
\newcommand{\al}{\alpha}
\newcommand{\eps}{\epsilon}

\newcommand{\La}{\Lambda}
\newcommand{\la}{\lambda}
\newcommand{\ga}{\gamma}
\newcommand{\Ga}{\Gamma}

\newcommand{\si}{\sigma}

\newcommand{\Zz}{\mathbb{Z}}

\newcommand{\Rr}{\mathbb{R}}

\newcommand{\Qq}{\mathbb{Q}}

\newcommand{\cA}{\mathcal{A}}

\newcommand{\cB}{\mathcal{B}}

\newcommand{\cD}{\mathcal{D}}
\newcommand{\cE}{\mathcal{E}}

\newcommand{\cG}{\mathcal{G}}
\newcommand{\cH}{\mathcal{H}}

\newcommand{\cM}{\mathcal{M}}
\newcommand{\cN}{\mathcal{N}}

\newcommand{\cR}{\mathcal{R}}
\newcommand{\cS}{\mathcal{S}}

\newcommand{\Ii}{\text{\bf 1}}


\newcommand{\e}{\mathrm{e}} 
\newcommand{\sign}{\mathrm{sign}}
\newcommand{\dist}{\mathrm{dist}}


%
\newcommand{\mmmintone}[1]{{\dis{\int\kern -.43cm
-}}_{\kern-.21cm\substack{#1}}\;\;}
\newcommand{\mmmintwo}[2]{{\dis{\int\kern -.43cm
-}}_{\kern-.21cm\substack{#1}}^{\substack{#2}}\;\;}
\newcommand{\submint}{{\scriptstyle{\int\kern -.66em -}}}
\newcommand{\submintone}[1]{{\scriptstyle{\int\kern -.66em
-}}_{\tiny{\kern-.21em\linetwo{}{\substack{#1}}}}}
\newcommand{\fracmint}{{\textstyle{\int\kern -.88em -}}}
\newcommand{\fracmintone}[1]{{\textstyle{\int\kern -.88em
-}}_{\tiny{\kern-.34em\substack{#1}}}\;}

\newcommand{\ba}{\begin{array}}
\newcommand{\ea}{\end{array}}
\newcommand{\bq}{\begin{eqnarray}}
\newcommand{\eq}{\end{eqnarray}}
\newcommand{\bqw}{\begin{eqnarray*}}
\newcommand{\eqw}{\end{eqnarray*}}

\newcommand{\und}{\underline}
\smallskipamount=0.4truecm
\medskipamount=0.6truecm
\bigskipamount=1truecm

\newcommand{\thvd}[2]{}

\newcommand{\tivd}[2]{}

\long\def\notes#1{\ifinner
         {\tiny #1}
         \else
          \marginpar{\protect\tiny #1}%
          \fi}%
\title[]{Study of a long range perturbation of a one-dimensional Kac model}

\author{Marzio Cassandro}
\address{Universit\`a di Roma ``La Sapienza'',
P.le A. Moro, 00185 Roma, Italy.} \email{marzio.cassandro@roma1.infn.it}

\author{Immacolata Merola}
\address{Dipartimento di Matematica Pura ed Applicata, Universit\`a di L'Aquila, 67100
L'Aquila, Italy.} \email{immacolata.merola@dm.univaq.it}

\author{Maria Eulalia Vares}
\address{Centro Brasileiro de Pesquisas F\'\i sicas (CBPF), Rua Dr. Xavier Sigaud 150, 22290-180 Rio de Janeiro, RJ, Brasil. }\email{eulalia@cbpf.br}

\begin{document}

\maketitle
\setcounter{equation}{0}
\setcounter{section}{0}

\setcounter{page}{1}
\setcounter{part}{0}


\date{}
\begin{abstract}
\noindent We consider a one dimensional ferromagnetic Ising spin system with interactions that correspond to a $1/r^2$ long range perturbation of the
usual Kac model. We apply a coarse graining procedure widely used for higher-dimensional finite range Kac
potentials to describe the basic properties of the system and the relation with the mean field theory.
\smallskip

\keywords Kac model, long range interaction, Peierls estimates
\end{abstract}

\vskip .5 cm

\section{Introduction}

\noindent We consider a one dimensional Ising spin system on $\mathbb{Z}$ interacting by a long range perturbation of the
usual Kac model. More precisely, for a small positive parameter $\gamma$, the coupling $J(r)$ between spins at distance $r$
is given by $\gamma$ if $|r| \le (2\gamma)^{-1}$, and by $\la/r^{2}$ otherwise, where $\la>0$ is fixed. Applying the perturbative
scheme around the mean field developed in \cite{CP} for finite range Kac potentials in dimensions $d\ge 2$ (see also \cite{e-book})
and following the notion of contours introduced by Fr\"{o}hlich and Spencer in \cite{FS} as implemented in \cite{CFMP}, we study basic
properties of this model for small but finite $\gamma$.

The main properties of percolation and phase transitions for one-dimensional ferromagnetic models with long range interactions
have been established in the seminal papers \cite {FS}, \cite {NS}, \cite{AN}. Particularly relevant are the results obtained in
\cite{FS}, \cite{ACCN}, and \cite{IN}, where the role of long and short range components of the interactions
has been singled out. When $r^2J(r) \to \lambda \in (0,\infty)$ as $r \to
\infty$, they establish the existence of phase transition, prove the
discontinuity of the magnetization at the critical point $\beta_c$ (the so called Thouless effect)
and (among other things) determine the limiting value of $\beta_c$ when a short range interaction, say $J(1)$,
tends to infinity. For some related more recent results see e.g. \cite{M,MSV}.

Our model belongs to this class, except that the strong short range interaction is replaced by the standard finite Kac potential that
in the limit $\gamma \to 0$ gives the mean field model with spontaneous magnetization
\begin{equation*}
m_\beta=\tanh \beta m_\beta
\end{equation*}
for $\beta >1$.

The existence of a finite bound for $\beta_c(\gamma,\lambda)$, uniform in $\gamma$ for $\gamma$ small follows by putting together the results of \cite{NS} for
one-dimensional independent site-bond percolation and the inequalities between percolation and Ising models, obtained in \cite{ACCN} through the Fortuin-Kasteleyn representation (\cite{FK}). Instead, we will exploit a coarse graining procedure (widely used to study finite range Kac systems) to get not only a direct proof of this bound, but also a detailed description of the typical configurations. This approach allows to display the relations with the mean field theory. We show the existence of $\bar\beta(\lambda)$ so that for all $\gamma$ sufficiently small:

\noindent a) $\beta_c(\gamma,\lambda) < \bar\beta(\lambda)$.

\noindent b) For all $\beta> \bar{\beta}(\lambda)$,
the magnetization under the extremal Gibbs measure with $+1$ ($-1$) external
condition is close to the mean field value $m_\beta$ ($-m_\beta$ resp.).

\noindent c) For all $\beta>\bar{\beta}(\lambda)$ and
external conditions cf. Definition \ref{def:cS}, or still as in b) above, the typical configurations
exhibit large intervals (of length $\ge\exp{(\frac {c(\beta,\lambda)}{\gamma\ln{1/\gamma}})}$) with
magnetization close to $+m_\beta$ or $-m_\beta$ interrupted by fluctuations of the opposite
sign of order $o(\gamma^{-1})$.

We believe that with a proper implementation of the multiscale approach introduced by \cite{IN}, the upper bound $\bar\beta(\lambda)$ might
be improved and that the spontaneous magnetization should stay close to the mean field value for any $\beta >\beta_c(\ga,\la)$. Nevertheless,
it is not clear if the method can be suitably applied to our case. Dealing with coarse grained configurations imposes difficulties in the
treatment of Peierls type estimates, and the contour methods implemented here do not provide an optimal result.

\vskip 0.5cm

\section{The model}

\noindent We consider a spin system on $\Zz$: $i\in \Zz$
 $\si(i)\in \cS_1:=\{-1,+1\}$.
 Given $\bar\si \in \cS_1^{\Zz}$ and  $\La\subset \Zz$ finite,
the model on $\cS_\La:=\cS_1^{\La}$ is defined by the Hamiltonian

\begin{eqnarray}
  \label{def:H}
H_\La(\si_\La|\bar\si)= -\frac{1}{2} \sum_{i,j\in \La\cap \Zz} J_\ga(|i-j|)\si(i)\si(j) -
\sumtwo{i\in \La\cap \Zz}{j\in \La^c\cap \Zz} J_\ga(|i-j|)\si(i)\bar\si(j),
  \end{eqnarray}
  with
  \begin{eqnarray}
  J_\ga(|i-j|)&:=& {\ga} \Ii_{[|i-j|\le \ga^{-1}/2]}+
\la \frac{\Ii_{[|i-j|> \ga^{-1}/2]}}{|i-j|^{2}}\nn
\\
&\equiv & \ga (J^{(0)}+\tilde\la J^{(1)})(\ga(i-j)),
  \label{def:J}
\end{eqnarray}
where $\tilde\la= \la \ga$,
\begin{equation}
\label{def:JJ}
J^{(0)}(r)=\Ii_{[|r|\le1/2]} \;\;\; \text{and}\;\;\; J^{(1)}(r)=\frac{1}{r^2} \Ii_{[|r|>1/2]},
\end{equation}
and $\Ii_A$ stands for the indicator function of the set $A$.
 The parameter $(2\ga)^{-1}$, which gives (in microscopic scale) the range of the basic short range mean field
 interaction, is a positive even integer assumed to be large throughout. The Gibbs measure at inverse temperature $\beta$ on the finite volume $\La$ and with external condition $\bar\si$ is given by
\begin{eqnarray*}
\mu_{\La,\beta,\gamma}(\si_\La|\bar \si)=\frac{\dis{e^{-\beta H_\La(\si_\La|\bar\si)}}}{Z_\La(\bar \si)},
\end{eqnarray*}
where
\begin{eqnarray*}
Z_{\La,\beta,\gamma}(\bar \si):= \sum_{\si_\La \in \mathcal{S}_\La} e^{-\beta H_\La(\si_\La|\bar \si)}.
\end{eqnarray*}
To avoid heavy notation we usually omit the parameter $\lambda$ that
appears in (\ref{def:J}). Sometimes, whenever no confusion is added, we also omit
$\gamma$ or the inverse temperature $\beta$ from the notation.

In this paper, we shall always work in the so-called mean-field phase transition region for  $J^{(0)}$, i.e., we assume $\beta >1$ throughout. Let $m_\beta$  denote the mean field value of the magnetization at temperature $1/\beta$, i.e., the positive solution of the mean field equation:
\begin{eqnarray}
\label{eq:mbeta}
m_\beta =\tanh (\beta m_\beta).
\end{eqnarray}
Given $b>0$ we write $\tilde\beta(b)$ for the unique value of $\beta$ that solves
\begin{equation}
\label{eq:betalambda}
\beta m^2_{\beta}=b,
\end{equation}
with $m_\beta$ the non-zero solution of \eqref{eq:mbeta},

In this context, we have:
\begin{thm}
\label{thm:main}
There exists $\bar b$ (independent of $\lambda, \gamma$) so that the following holds:

\noindent a)  For any $\lambda>0$ we can find $\gamma_0(\lambda)>0$ so that for any $\gamma <\gamma_0(\lambda)$, the system with parameters $\gamma,\lambda$ exhibits phase transition and the critical inverse temperature
 $\beta_c(\ga,\la)$ satisfies
\begin{eqnarray}
\label{eq:m1}
\beta_c(\ga,\la)\le \tilde\beta(\bar b/\lambda)=:\bar\beta(\la).
\end{eqnarray}

\noindent b) If $\beta >\bar\beta(\lambda)$, then
\begin{eqnarray}
\label{eq:m3}
{\lim_{\ga \to 0}\mu^{{+}}_{\beta,\gamma}(\si_0)=m_\beta}
\end{eqnarray}
where $\mu^{+}_{\beta,\gamma}:=\lim_{\La\to \mathbb{Z}}\mu_{\La,\beta,\gamma}(\cdot|+\und 1)$ with $+\und 1 $ denoting
the configurations $\bar \si_j=+1$ for all $j$ and, as usual, $\mu(f)$ denotes the integral of $f$ with respect to $\mu$.
\end{thm}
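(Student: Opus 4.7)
I would combine the Cassandro--Presutti coarse-graining scheme \cite{CP} with the Fr\"ohlich--Spencer one-dimensional contour mechanism \cite{FS}, as implemented in \cite{CFMP}. Fix a mesoscopic block scale $\ell=\gamma^{-1-\delta}$ (for some small $\delta>0$) and partition $\Zz$ into consecutive intervals $\{C_k\}$ of length $\ell$. For a small tolerance $\zeta>0$ and block average $m_k(\sigma)=\ell^{-1}\sum_{i\in C_k}\sigma(i)$, define
\[
\eta_k(\sigma):=\begin{cases} +1 & \text{if } |m_k-m_\beta|\le\zeta,\\ -1 & \text{if } |m_k+m_\beta|\le\zeta,\\ 0 & \text{otherwise.}\end{cases}
\]
The Lebowitz--Penrose large-deviation estimate for the finite-range Kac part (as in \cite{CP}) shows that having $\eta_k=0$ costs a free-energy factor $\le e^{-c(\zeta)\beta\ell}$, where $c(\zeta)>0$ is the mean-field free-energy gap at $\pm m_\beta$. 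A \emph{contour} $\Gamma$ is then, in the usual Peierls sense, a maximal interval of blocks where the coarse-grained configuration is not constantly $+1$ nor constantly $-1$, bordered by pure-phase blocks of possibly opposite signs.

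The Peierls estimate is obtained by comparing a contour configuration to the flipped one that makes both flanks of $\Gamma$ agree. Splitting $\Delta H(\Gamma)=\Delta H_{\mathrm{Kac}}(\Gamma)+\Delta H_{\mathrm{long}}(\Gamma)$, the first term is bounded below via \cite{CP} by a sum of block free-energy gaps,
\[
\Delta H_{\mathrm{Kac}}(\Gamma)\ge c(\zeta)\,\ell\,\#\{k\in\Gamma:\eta_k=0\}.
\]
The second comes from the $\lambda/r^2$ tail: when $\Gamma$ is a $\pm$-contour (opposite phases on its two flanks) of block length $|\Gamma|$, summing the long-range pairs across the interface yields the Fr\"ohlich--Spencer logarithmic lower bound
\[
\Delta H_{\mathrm{long}}(\Gamma)\ge 4\,\lambda\,m_\beta^2\sum_{n=1}^{|\Gamma|}\frac{1}{n}-O(1)\ge c\,\lambda\,m_\beta^2\,\ln|\Gamma|.
\]
Thus the Boltzmann weight of a $\pm$-contour of block length $|\Gamma|$ is bounded by $|\Gamma|^{-c\beta\lambda m_\beta^2}$ times an exponentially small factor in the number of $0$-blocks. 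Since the number of one-dimensional contours of block length $N$ through a fixed site is at most polynomial in $N$, the Peierls series
\[
\sum_{\Gamma\ni 0}e^{-\beta\Delta H(\Gamma)}\;\le\;\sum_{N\ge 1}N^{p}\,N^{-c\beta\lambda m_\beta^2}
\]
converges provided $\beta\lambda m_\beta^2>\bar b$ for a universal constant $\bar b$; by (\ref{eq:betalambda}) this is exactly the condition $\beta>\bar\beta(\lambda)$.

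Convergence of the Peierls series gives part (a) by the classical Peierls argument: under $+$ boundary conditions the probability that $0$ lies inside a $\pm$-contour is $o_\gamma(1)$, hence bounded away from $1/2$, yielding the phase transition and $\beta_c(\gamma,\lambda)\le\bar\beta(\lambda)$. For (b), if $0$ is not surrounded by a $\pm$-contour and its block satisfies $\eta_{C_0}=+1$ --- which again holds with probability $1-o_\gamma(1)$ --- then $|m_{C_0}(\sigma)-m_\beta|\le\zeta$; combining this with a Kac-type fluctuation bound of the form $|\mu^+_{\beta,\gamma}(\sigma_0)-\mu^+_{\beta,\gamma}(m_{C_0})|=o_\gamma(1)$ and letting first $\gamma\to 0$, then $\zeta\to 0$, produces (\ref{eq:m3}). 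The main technical hurdle is the Peierls lower bound above: in one dimension the short-range Kac potential alone gives only an $O(\gamma)$ domain-wall energy and is unable to support a transition, so the entire argument rides on extracting the Fr\"ohlich--Spencer logarithm from the $1/r^2$ tail uniformly in $\gamma$. This requires a careful scale-by-scale accounting of pairs crossing the interface while simultaneously controlling the combinatorial entropy of coarse-grained contours and ensuring that the free-energy cost of $0$-blocks dominates their entropy; it is this balance that pins down the universal constant $\bar b$ and forces the non-optimal dependence $\bar b/\lambda$ in $\bar\beta(\lambda)$.
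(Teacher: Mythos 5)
Your high-level plan (coarse-grain, build contours, combine a Kac free-energy gap for $0$-blocks with a Fr\"ohlich--Spencer $\ln|\Gamma|$ from the $1/r^2$ tail, balance energy against entropy) is the right one and is indeed what the paper does. But the central technical step is missing, and with the contour notion you use the Peierls sum does not close.

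You define a contour as a single maximal interval of blocks that is ``not constantly $+1$ nor constantly $-1$,'' and then estimate the Peierls series as $\sum_N N^p N^{-c\beta\lambda m_\beta^2}$. This counts \emph{intervals}, not the coarse-grained configurations inside them. A contour interval of length $N$ can contain an arbitrary alternation of $+1$ and $-1$ blocks with few or no $0$-blocks, so the Kac term $c(\zeta)\ell\,\#\{0\text{-blocks}\}$ does not suppress this internal entropy, and the long-range term only gives the single $\ln N$ from the two outer flanks. The internal entropy of such an interval is exponential in $N/\ell$, which overwhelms $N^{-c\beta\lambda m_\beta^2}$. This is precisely the obstacle Fr\"ohlich--Spencer identified: in one dimension with $1/r^2$ tails you cannot use interval contours; you must pair interface points hierarchically so that every sign alternation contributes its own logarithm and the collection can be organized (the ``triangles'' and ``rectangles'' of \cite{CFMP}, summarized in the paper's properties {\bf P.0}--{\bf P.2}). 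The paper's Proposition \ref{prop:m3.2}, quoting Theorem 4.1 of \cite{CFMP}, is exactly the nontrivial entropy estimate for these objects; your polynomial bound is not a substitute. Relatedly, with a single-scale block average a neighbouring $(+,-)$ pair of blocks is not forced to have a $0$-block between them; the paper prevents this by defining $\eta$ through the $\sup$ of $\ell_-$-averages inside each $\ell_+$-block and by requiring three consecutive blocks of one sign to set $\Theta=\pm1$, which is what guarantees that any phase change costs either a logarithm or an $\ell_-$-scale free energy.

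Two further gaps: first, the step from ``$0$ is in a $+1$ block'' to $\mu^{+}_{\beta,\gamma}(\sigma_0)\to m_\beta$ is asserted via a vague ``Kac-type fluctuation bound''; the paper instead derives it directly from the two quantitative estimates \eqref{eq:th1-a}--\eqref{eq:th-b} of Theorem \ref{thm:1}, which control both $\eta^\psi(0)\ne1$ and $\eta^\psi(0)=-1$ with different (and explicit) rates. Second, the boundary condition $\bar\sigma=+\und{1}$ is not the same as $\bar\sigma\in\cS^+$: the pure $+1$ boundary actually favours $\eta=0$ near $\partial\Lambda$, and the paper needs a separate argument, using $|\Lambda|\ge\exp(2/\gamma)$, to pass from the $\cS^+$-boundary estimates to $\mu^+_{\beta,\gamma}$; your proposal treats this as automatic.
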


\noindent {\bf Remark.} The notion of criticality is the standard one, marking the transition from uniqueness to multiple infinite volume Gibbs measures. The classical Dobrushin uniqueness condition (see \cite{D70}) tells that $(1+4\la\gamma)^{-1} <\beta_c(\ga,\la)$.
\vskip 0.5 cm
\noindent Taking into account the result in \cite{ACCN} where it is shown that
if $\lim_{r\to \infty}r^2 J(r)= \la$ exists, with $0<\lambda<\infty$, then
the following dichotomy holds\footnote{{\bf Notational Remark.} For a given interaction $J(\cdot)$, the Hamiltonian in (\ref{def:H}) corresponds to twice that in \cite{ACCN,IN}.}: $\mu^{+}_{\beta,\gamma}(\si_0)=0$ or
$\mu^{+}_{\beta,\gamma}(\si_0)\ge (2\beta \la)^{-1/2}$, we have at once that $\bar b\ge 1/2$. It would be very interesting to extend the analysis to all values of $\beta$ larger than $\tilde\beta(1/(2\lambda))$,  but our techniques do not allow this for the moment. (Our proof works for any $\bar b > 7$.) We should also notice that in \cite{ACCN}, see also \cite{AN} and \cite{IN}, the more general context of Potts models is considered.

The plan of the paper is as follows: in section \ref{sec:coarsegraining} we exploit a coarse graining procedure
widely used in the study of Kac systems (see \cite{e-book}) to describe the configurations in terms of $\{-1,0,1\}$-valued spin
variables, and state our main theorem in this context. In section \ref{sec:proofthm1} we extend to our case the notion of contours introduced in \cite{FS}, but we follow the implementation given in \cite{CFMP}, that is better suited to control the contributions of the zero components of these new  spins. In section \ref{sec:proofof m31} we prove the upper bound for $\beta_c(\ga,\la)$ via a
Peierls argument. In section \ref{sec:basic-estimates} we prove the {free-energy} estimates necessary to implement the Peierls argument.

\vskip 0.5cm
\section{Coarse graining}
\label{sec:coarsegraining}

\noindent In the sequel we will introduce three new scales, $\ell_0<\ell_-<\ell_+$, where $\ell_0$, $\ell_-/\ell_0$ and $\ell_+/\ell_-$
are positive integers, all tending to $\infty$ as $\ga \to 0$. We also assume  $\ell_0, \ell_-, \ell_+ \in \ga^{-1}\Qq$:
\begin{equation}
\label{def:ell0}
\ell_0:= \delta_0\ga^{-1} <<\ell_-:= \delta_-\ga^{-1}<< (2\ga)^{-1} <<\ell_+:= \delta_+\ga^{-1}.
\end{equation}
For our proof to work  $\delta_0,\delta_-,\delta_+$ should satisfy some relations; an example is given in \eqref{eq:deltachoice}.

\vskip .5cm \noindent

\noindent {\bf Notation.} For $B\subset \Zz$ finite, $|B|$ denotes its cardinality. For any $x\in \ell_*\Zz$, where $*$ stands for $0,-$ or $+$,
we write $C^{*}_x= [x, x+\ell_*)\cap \mathbb{Z}$, also called $\ell_*$-blocks in the sequel.\footnote{$\ell_*$-block may also refer to any interval that is measurable with respect to the partition generated by the $\{C^{*}_x\}$, also called $\ell^*$-measurable interval.} We also set
\begin{eqnarray}
\label{def:m-si}
m^{\ell_*}(x;\si)=
\frac{1}{{| C^*_x|}}\sum_{i\in C^*_x} \si(i), \hskip1cm   \si\in \cS,
\end{eqnarray}
where now $*$ stands for 0 or $-$.  Thus $m^{\ell_*}(x;\si)$ takes  values in $\{-1,-1+\frac{2}{\ell_*},\dots, 1\}=:\cM_*$.
We call $\cM_{*,\La}:=\cM_*^{|\La\cap \ell_*\Zz|}$.
\vskip .5cm \noindent

For any configuration $\sigma \in \mathcal {S}_\Lambda$ we now define the coarse-grained variables $\eta_\Lambda=(\eta^\psi(x,\sigma)\colon C_x^+\subset \La)$. The variable $\eta^\psi(x,\sigma)$ provides information on how close the averages $m^-(\cdot,\sigma)$ are to the non-zero solutions of the mean field equation $\pm m_\beta$, over the $\ell_+$-block $C^+_x$. They depend also on a parameter $\psi$ related to the accuracy,
and which will be thought as suitably small in comparison with $m_\beta$.

\begin{eqnarray}
\label{def:eta}
\eta^\psi(x, \si)= \left\{
                \begin{array}{ll}
                  -1, & \hbox{if } \;\; \sup_{y\in C^+_{x}\cap \ell_-\Zz}|m^{\ell^-}(y;\si)+m_\beta|<\psi,
 \\
 \\
                  +1, & \hbox{if } \;\; \sup_{y\in C^+_{x}\cap \ell_-\Zz}|m^{\ell^-}(y;\si)-m_\beta|<\psi,
      \\
\\
                  0, & \hbox{otherwise,}
                \end{array}
              \right.
\end{eqnarray}
where we take $\psi=\frac{1}{N} (m_\beta)^2$ for $N$ large.

\vskip 0.5 cm
\noindent {\bf Remark.} The parameter $\psi$ is fixed in the proof below, and therefore sometimes omitted in the notation. On the other
hand, a careful examination of the estimates shows that one can indeed make $\psi=\psi(\gamma)$ tend
to zero with $\gamma$.

\begin{defin}
\label{def:cS}
We set
       \begin{eqnarray*}
       \cS^\pm&:=&\{\si: \eta^\psi(x;\si)= \pm 1, \forall x \in \ell_+\mathbb{Z}\}.
\end{eqnarray*}
\end{defin}
Since we will describe the system in terms of the $\eta$ variables, it is convenient to take
$\Lambda$ as an $\ell_+$-measurable interval (i.e. $\Lambda=[h, k)\cap \mathbb{Z}$ for $h, k \in \ell_+\mathbb{Z}$, $h<k$).
For notational convenience we also take it centered at $0$ in the statement below.

\noindent Our main theorem is:

\begin{thm}
\label{thm:1}
{Let $\ell_-=\delta_-/\gamma,\ell_+=\delta_+/\gamma$ with $\delta_-,\delta_+$ chosen as in \eqref{eq:deltachoice}.
There exists a positive constant $\bar b$ such that if $\bar\beta(\lambda)$ is defined as the solution  of $\la\beta m_{\beta}^2=\bar b$, then
for any $\beta >\bar\beta(\la)$, there exists $\ga_0=\ga_0(\beta,\lambda)$ positive so that for all $\ga<\ga_0$ and all $\bar \si \in \cS^+$:}
\begin{eqnarray}
\label{eq:th1-a}
\mu_\La(\eta^{\psi}(0)\ne 1|\bar \si)&\le & e^{-\beta c'_\beta \psi^3 \ell_-}
\\
\label{eq:th-b}
\mu_\La(\eta^{\psi}(0)= -1|\bar \si)&\le & e^{-\beta\tilde J \ga^{-1}},
\end{eqnarray}
for some $c'_\beta>0$ and $\tilde J>0$ both depending $\beta$ ($\tilde J$ is the function given in (\ref{eq:JJJJ}). The same hold when $\bar\si=+\underline{1}$, provided $|\Lambda| \ge \exp{(2/\gamma)}$.
\end{thm}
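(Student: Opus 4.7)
The plan is to execute a Peierls-type contour argument directly on the coarse-grained variables $\eta^\psi$, in the spirit of Fr\"ohlich--Spencer \cite{FS} as implemented for Kac systems in \cite{CFMP}, fed by the mean-field free-energy estimates proved in Section \ref{sec:basic-estimates}. Because $\bar\si\in\cS^+$, the configuration outside $\La$ is already in the $+$ mean-field phase at scale $\ell_-$, so the natural object is the \emph{bad set} $B(\si):=\{x\in\ell_+\Zz\cap\La : \eta^\psi(x,\si)\ne +1\}$. A contour $\Ga$ is then a maximal $\ell_+$-connected component of $B(\si)$ together with the labelling $\eta^\psi|_{\mathrm{sp}(\Ga)}\in\{-1,0\}^{\mathrm{sp}(\Ga)/\ell_+}$; both events in \eqref{eq:th1-a} and \eqref{eq:th-b} amount to the existence of a contour with $0\in\mathrm{sp}(\Ga)$, the second one moreover constrained to have $\eta^\psi(0)=-1$.

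Next I would derive a Peierls bound $\mu_\La(\Ga\subset\si\,|\,\bar\si)\le e^{-\beta W(\Ga)}$ by the standard flip-and-compare trick (replacing $\si|_{\mathrm{sp}(\Ga)}$ by a reference $+$ configuration and estimating the free-energy cost). The weight $W(\Ga)$ will collect three contributions. First, a local Kac/mean-field excess free energy of order $c_\beta\psi^2\ell_-$ on every $\ell_-$-subblock inside $\mathrm{sp}(\Ga)$ where $|m^{\ell_-}\mp m_\beta|>\psi$; this is exactly what Section \ref{sec:basic-estimates} supplies and it yields \eqref{eq:th1-a}, with the extra factor $\psi$ in the $\psi^3$ coming from the admissibility constraint on $\eta^\psi$. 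Second, the mean-field double-well barrier $\tilde J\ga^{-1}$ for every transition between the $+m_\beta$ and $-m_\beta$ phases that $\si$ must perform inside the contour; this is what produces \eqref{eq:th-b}. Third, a long-range term coming from the $1/r^2$ tail of $J_\ga$, which for an interval of length $L$ carrying magnetization $\approx -m_\beta$ inside a $+m_\beta$ sea contributes
\[
4\la m_\beta^2\!\!\suptwo{i\in\text{in}}{j\in\text{out}}\!\!\frac{1}{|i-j|^2}\approx 4\la m_\beta^2\log L,
\]
hence a Peierls factor $L^{-4\beta\la m_\beta^2}$. This last factor is precisely the one-dimensional mechanism making phase transitions possible in $1/r^2$ systems, and it is the one that ties the threshold in the theorem to the quantity $\la\beta m_\beta^2$.

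The third step is to sum the Peierls weights over all contours touching $0$. In one dimension the geometric entropy of a contour is only a factor coming from the choice of its two endpoints, but the entropy of the \emph{internal} labelling has to be absorbed by the long-range decay factor. I would use the hierarchical contour construction of \cite{FS,CFMP} adapted to our three scales $\ell_0\ll\ell_-\ll\ell_+$: short gaps between wrong $\ell_+$-blocks are merged into a single contour and pay a short-range cost, while long gaps are controlled by the $\log L$-barrier and contribute an $L^{-c\beta\la m_\beta^2}$ tail. The threshold $\bar b$ in the hypothesis $\la\beta m_\beta^2>\bar b$ (any $\bar b>7$, as noted in the Remark after Theorem \ref{thm:main}) is exactly what is needed for all resulting polynomial sums to converge with room to spare.

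Finally, the case $\bar\si=+\underline 1\notin\cS^+$, which is covered under the condition $|\La|\ge e^{2/\gamma}$, should be reduced to the previous one by a preliminary Peierls argument near $\partial\La$: with overwhelming probability there is an $\ell_+$-block already in $\cS^+$-phase at distance $\lesssim e^{1/\gamma}$ from the boundary, and the volume assumption guarantees that this plus belt is reached well before the origin, so one can restart from an effective $\cS^+$ boundary condition. The \textbf{main obstacle} is clearly the long-range ingredient combined with the internal entropy of the contour: a naive decoration count explodes geometrically while the $1/r^2$ Peierls factor is only polynomial. Reconciling these is exactly what the hierarchy in \cite{FS,CFMP} does, and adapting it to the coarse-grained/Kac setting, where the short-range cost is available only in the form of an LDP estimate on the averages $m^{\ell_-}$, is the most technical part of the argument.
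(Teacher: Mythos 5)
Your overall plan matches the paper's: a Peierls contour argument on the coarse-grained $\eta^\psi$-configuration, with hierarchical Fr\"ohlich--Spencer/CFMP contours, fed by the mean-field free-energy estimates of Section \ref{sec:basic-estimates}, and with the one-dimensional $1/r^2$ tail supplying the $\log$-barrier that ties the threshold to $\lambda\beta m_\beta^2$. You correctly identify the three sources of Peierls cost and the role of $\bar b>7$.

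Two things are off, though, and they are not cosmetic. First, your definition of a contour as a maximal $\ell_+$-connected component of the bad set $B(\sigma)$ is not what the paper does, and if taken literally it breaks the argument. The paper first passes to the auxiliary $\Theta$-variable (which looks at three consecutive $\eta$-blocks), turns the configuration into a collection of \emph{triangles} (light-cone pairings of interface points, as in \cite{FS}) and \emph{rectangles} (maximal $\Theta=0$ runs), and only then groups these elements into contours via the recursive algorithm $\cR$ of \cite{CFMP}, whose separation condition {\bf P.1} is cubic in $|\Gamma|$ rather than ``short gaps merged.'' This matters because the two weights in \eqref{eq:lem-m4.1} are attached to different objects: triangles get $2\tilde\lambda(m_\beta-\psi)^2\ln(|T|\gamma)+\tilde J$, rectangles get $\delta_-\epsilon|Q|/(7\ell_+)$. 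The distinction between \eqref{eq:th1-a} and \eqref{eq:th-b} rests precisely on the observation that $\eta(0)=-1$ forces a triangle (hence the $\tilde J\gamma^{-1}$ barrier), while $\eta(0)=0$ might only force a rectangle (hence only the $\psi^3\ell_-$ barrier). A labelled connected interval does not carry this information in a form that the iterative removal scheme of \eqref{eq:via} can exploit, and the entropy over internal labellings is exactly what the triangle/rectangle compatibility conditions \eqref{compTT}--\eqref{compTQ} and the $|\Gamma|^3$-separation keep under control; your proposal defers this to ``the hierarchy'' without deriving anything.

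Second, your treatment of $\bar\sigma=+\underline 1$ is a different (and unjustified) argument. You propose to locate, with high probability, an $\ell_+$-block at distance $\lesssim e^{1/\gamma}$ from $\partial\La$ that is already in $\mathcal S^+$ and then condition on it to recover an effective $\cS^+$-boundary. The paper does \emph{not} condition or restart: it keeps the same triangle/rectangle/contour construction (Definition \ref{defin:triangles1}, ignoring boundary interface points) and simply shows in \eqref{eq:piu} that contours touching $\partial\La$ lose at most $a(\beta,\lambda)/\gamma<2/\gamma$ from the Peierls weight; the hypothesis $|\La|\ge e^{2/\gamma}$ then guarantees that the entropy-weighted contribution of such long contours containing $0$ is negligible, by Proposition \ref{prop:m3.2}. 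Your ``restart'' would require a Markov-type decoupling at the chosen $\ell_+$-block, which the long-range $1/r^2$ interaction does not provide for free; the paper's route avoids this issue entirely.
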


The statements in Theorem \ref{thm:1} for $\bar\si\in \mathcal{S}^+$ are proven at the end of section \ref{sec:proofthm1} and the extension to
$\bar \si =+\und{1}$ is proven at the end of section \ref{sec:basic-estimates}.
\vskip 0.5cm

\noindent {\bf Remark.}  Theorem \ref{thm:main} follows at once from Theorem \ref{thm:1}.

\vskip 0.5cm \noindent
\section{Proof of Theorem \ref{thm:1}}
\label{sec:proofthm1}

\noindent The proof of Theorem \ref{thm:1} is obtained by a Peierls contour argument. We first consider the case of
$\sigma \in \mathcal{S}^+$, and at the end we discuss how to adapt the estimates to the case when
$\bar \si=+\underline{1}$.

\vskip 0.5cm
\subsection{Triangles and rectangles}
\label{sub:triangles}
\vskip .5cm \noindent
Given any external configuration $\bar \si\in \cS^+$, we associate to each coarse grained configuration $\eta_\La$ in the volume $\La$
a configuration of ``triangles" and ``rectangles", whose definition is a natural extension of the one given in \cite{CFMP}. The triangles
arise from a geometric procedure (in a plane containing our one-dimensional system) to determine the connection between two interface
points, marking a plus or a minus region. (This is also analogous to what happens for usual contours in dimension $d\ge2$, where the ``natural"
definition of connection is also appropriate to describe energy fluctuations.)

\noindent We start by setting variables  $\Theta(h)$ which work as ``phase indicators" on the coarse grained lattice. For
$h \in \ell_+\mathbb{Z}\setminus\Lambda$, set $\eta(h)=\eta(h,\bar\sigma)=+1$ and then define for $h \in\ell_+\mathbb{Z} \cap \Lambda$
\begin{eqnarray*}
\Theta(h)= \left\{
                \begin{array}{ll}
                  -1, & \hbox{if } \eta(h-\ell_+)=\eta(h)=\eta(h+\ell_+)=-1 \\
                  +1, & \hbox{if } \eta(h-\ell_+)=\eta(h)=\eta(h+\ell_+)=+1 \\
                   0, & \hbox{otherwise.}
                \end{array}
              \right.
\end{eqnarray*}
An $\ell_+$-measurable interval $[h,k)$  is called {\em ``almost positive" }
if
\begin{equation*}
\Theta(h)=\Theta(k-\ell_+)=+1 \;\;\; \text{and } \Theta(i)\ne -1 \forall i \in \ell_+ \mathbb{Z}, h<i<k-\ell_+.
\end{equation*}
Notice that $\Theta(\cdot)=0$ is allowed inside an almost positive interval; in particular the magnetization over such an interval might
be negative. {\em Almost negative } intervals are defined analogously.

\begin{defin}[rectangles]
The rectangles, denoted by the letter $Q$, are defined as the $\ell_+$-measurable intervals
that correspond to maximal (non-null) runs of $\ell_+$-blocks where $\Theta=0$.
\end{defin}

\noindent {\bf Remark.} A rectangle of size less than $3\ell_+$ can occur only as a
set of two consecutive $\ell_+$-blocks with $\eta=-1$ in one block, and $\eta=+1$ in the other one. An isolated
$h$ for which $\Theta(h)=0$ is not possible.

\vskip .5cm \noindent
Therefore, a configuration $\Theta$ can be regarded as a sequence of maximal ``almost positive"
and ``almost negative" intervals  separated by some  special rectangles
that mark the transition from an interval with a given sign to the next one of opposite
sign; such rectangles are then called {\em interface intervals}.
Each $\Theta$-configuration will be represented in terms of ``triangles" and ``rectangles",
the correspondence being bijective once the boundary conditions are fixed. As in \cite{FS} and \cite{CFMP}, our construction
is based on suitably coupling together pairs of interface points. To this end we will use the criterion
of minimal distance, which is made geometrically intuitive through a graphical representation where each
spin configuration is mapped into a set of triangles and rectangles. The endpoints of the triangles will
be pairs of suitable coupled interface points.

The precise location of an interface point is immaterial; for convenience we choose, for each
$i \in \mathbb{Z}$, a point $r(i)$ in each interval $[i, i+1/100]$ with the property that for any four distinct integers $i_j$, $j=1,..,4$, $|r(i_1)-r(i_2)|\ne |r(i_3)-r(i_4)|$.  This choice is done once for all. If $[h, k)$ is an interface interval,
the points $r_h$ and $r_k$ are defined as the corresponding {\em interface points}, and considered to be paired in the
the following construction.

The construction of the triangles is slightly more complicated.
We start by attributing colors (blue and red) to each pair of interface points. For an interface interval $[h,k)$:
the point $r_h$ is red (blue)
if 
{$\Theta(h-\ell_+)=+1$}(-1, respectively), in which case $r_k$ will be blue (red, respectively)
corresponding to the fact that 
{$\Theta(k)=-1$}(+1, respectively).

We  let each interface point evolve into a trajectory of the same color, represented in the $(r,t)$ plane by the
line $r\pm t$ or  $t\ge 0$. The choice between the two  directions  of the trajectory is made
in such a way that each red line (blue line) projects its shadow on the (contiguous) almost positive (almost negative, resp.) interval.
We have thus, unless $\Lambda$ is almost positive (i.e. $\Theta(h)\neq -1$ for all $h$), a bunch of growing connected-lines
each one emanating from an interface point. Red lines ignore  blue lines, and viceversa. When two lines of the same color
collide they stop growing and the line which corresponds to the paired interface point (of opposite color) is also canceled.
In the meantime, all the other lines keep growing. Our choice of the location of the interface points ensures that collisions
occur one at a time so that the above definition is unambiguous.

The process described above will stop in a finite time $t<|\La|$, giving rise to triangles.
In fact  the collision of two points is represented graphically in the $(r,t)$ plane by a
triangle whose basis is the line joining the two interface points and whose sides are
      the two lines which meet at the time of collision.
      Triangles will be usually denoted by $T$.
%
%

\begin{defin}[Triangle]
If  $[r_i, r_j]$ is the basis of a triangle in the above
construction, the $\ell_+$-measurable interval $T=[i,j) \cap \mathbb{Z}$ is called {\em triangle}.
\end{defin}

\noindent {\bf Remark.}
The neighboring external $\ell_+$--blocks to the left and to the right of a triangle have equal $\eta$-value, $+1$ or $-1$. This common sign
is set as the {\em sign} of the triangle. Notice that to the left and to the right of a triangle $T$ there are always at least two contiguous $\ell_+$-blocks where $\Theta=0$.

 Given any $\bar \si\in \cS^+$ and $\eta_\Lambda\neq +\und{1}$, we have defined the $\Theta$ configuration and
represented it as a collection  $(\und T,\und Q)=(T_1,..,T_n,Q_1\dots,Q_m)$ of triangles and rectangles. When
$\eta_\Lambda=+\und{1}$, we have an empty configuration, hereby denoted by $\emptyset$.

\vskip .5cm
Recalling the definition of $m^{\ell_-}(\si)$ in \eqref{def:m-si} we may set:

\begin{defin}[$\und S(m)$]
For any given boundary condition $\bar \si$, and for any $m=m^{\ell_-}(\si)$,
let $\und S(m;\bar \si)=(\und T(m), \und Q(m))$ denote
the configuration of triangles and rectangles that correspond to $\eta_\Lambda(m)$.
\end{defin}

\begin{defin}[distance]
\label{defin:distance}
For $A$ and $B$ non-empty subsets of $\mathbb{R}$,  $d(A,B)$ denotes the usual
distance between the two sets. If $Q_1$ and $Q_2$ are rectangles, we set $D(Q_1,Q_2)=d(Q_1,Q_2)$. On the other hand, if
$T_1$ and $T_2$ are triangles, let $D(T_1,T_2)= d(\e(T_1), \e(T_2))$, where $e(T)$ denotes the set of extremal points of the
interval (in $\mathbb{Z}$) which gives $T$. Finally, when $T$ is triangle and $Q$ a rectangle we set $D(T,Q)=d(\e(T), Q)$.
\end{defin}
\noindent Notice that by construction:
\begin{eqnarray}
\label{compTT}
&&D(T_1,T_2)\ge \min\{|T_1|, |T_2|\},
 \\
\label{compQQ} &&D(Q_1, Q_2)\ge \ell_+,
\end{eqnarray}
while between a triangle $T$ and a rectangle $Q$ of the same configuration
one of the two following relation holds:
\begin{eqnarray}
&&T\cap Q=\emptyset \;\;\text{and} \;\;  D(T,Q)\ge 1
\nn \\
&&\mbox{or}
\label{compTQ}
\\
&&Q\subset T \;\;\text{and} \;\;D(T,Q)\ge \ell_+.
\nn
\end{eqnarray}
In words, a rectangle can be attached to a triangle only externally.
\vskip .5cm \noindent

\begin{defin}[compatibility]
A given configuration of triangles and rectangles $(\und T,\und Q)$
(always related to $\ell_+$--measurable blocks) is called {\em compatible}
if for any couple of elements \footnote{We abuse language here.} of  $(\und T,\und Q)$  \eqref{compTT} \eqref{compQQ}, \eqref{compTQ} hold,
and moreover any rectangle $Q$ in this configuration satisfies $|Q|/\ell_+\ge 2$; moreover, when $|Q|=2\ell_+$ then $Q=[h,h+2\ell_+)$
for some $h \in \ell_+\mathbb{Z}$ with $\eta(h)\eta(h+\ell_+)=-1$.

In the sequel we denote by $\und S=(\und T,\und Q)$ a set of
compatible elements, and by the letter $S$ a generic element
of such a configuration $\und S$. We say that two sets of compatible
elements $\und S_1, \und S_2$ are compatible (and we use the notation $ \und S_1\sim \und S_2$)
if $\und S_1\cup \und S_2$ is a compatible configuration.
\end{defin}

\noindent {\bf Remark} \eqref{compTT} allows the
possibility that (for  compatible)  $Q_1,T_1,T_2$ :
$T_1\subset T_2$ or $Q_1\subset T_2$ (bur not viceversa).

\begin{defin}
\label{defin:triangles1}
For the external condition $\bar\si =+\und{1}$ we still define the triangles and rectangles in $\La$ as if
the boundary condition were in $\cS_\La^+$, namely ignoring the interface points on the boundary.
\end{defin}

\vskip 0.5cm \noindent
\subsection{Contours}
\label{sub:contours}

As before we first consider the case $\bar \si\in \cS_\La^{+}$.

Our aim is to apply the Peierls argument to our system. Following \cite{CFMP}, for any given configuration
$\und S\equiv (\und T, \und Q)$ on $\La$, we define a partition of $\und S$ by suitably grouping its ``elements" (or constituents) $S$
\footnote{When not needed to distinguish between triangles and rectangles, we use the notation $S$ to denote a generic
constituent of the configuration $\und S$. Also we slightly abuse language here.} in {\em contours} $\Ga^{(i)}$, which
should be sufficiently separated from each other, so as to exhibit a weak dependence. This grouping procedure
is obtained by an algorithm  that creates an hierarchical network of connections, that at the top level identifies
groups of connected constituents, namely the contours.

The algorithm $ \cR (\und S)$ on $\{\und S\}$ which
associates uniquely to any (compatible) configuration $ \und S $  a configuration of contours $\und \Gamma= \{ \Ga_j\}$
is the  same defined in  \cite {CFMP}. Here we quote only its properties, before which we need
the following notation:

We denote by $T(\Ga)$ the smallest interval which contains all the
elements of the contour, its right and left endpoints are denoted by $x_{\pm} (\Ga)$. We set:
\begin{eqnarray*}
  |\Ga|:= \sum_{S\in \Ga} \frac{|S|}{\ell_+}\,.
\end{eqnarray*}

\noindent {\em Properties of $\cR(\und S)$}

\noindent {\bf P.0}  {\it   Let $ \cR (\und T, \und Q) = ( \Ga_1, \dots, \Ga_n)$, $ \Ga_i= \{ T_{q,i} Q_{p,i}, 1 \le q\le k_i, 1 \le p \le m_i\}$,
then $\und T=  \{ T_{q,i}, 1\le i \le n, 1 \le q\le k_i\}$, $\und Q=  \{ Q_{p,i}, 1\le i \le n, 1 \le p\le m_i\}$.}

\noindent {\bf P.1}  {\it Contours are well separated from each other.}  All pairs $ \Ga \neq \Ga'$
 verify, for a suitable constant $\C$:
\begin{eqnarray}
\label{SS1}
D (\Ga, \Ga'):= \min_{S \in \Ga, S' \in \Ga'} D ( S,S') > \C \ell_+\min \left\{ |\Ga|^3, |\Ga'|^3\right \}.
\end{eqnarray}
Condition (\ref{SS1}) allows $T (\Ga) \cap T (\Ga') \ne \emptyset $ in which case either
$T(\Ga)\subset T(\Ga')$ or $T(\Ga')\subset T(\Ga)$; moreover, supposing for instance
that the former case is verified, (in which case we call $\Ga$ an inner contour), then for any
element $ S'_i \in \Ga'$, either $T(\Ga)\subset S'_i$ or  $T(\Ga)\cap  S'_i= \emptyset $  and $|\Ga|^3<|\Ga'|^3$.

\noindent{\bf Remark.} The constant $\C$ has to be taken large enough; in particular we need  $\C m^2_\beta>10$ in the statement of Proposition \ref{prop:m3.1}.

\noindent {\bf P.2}  {\it Independence.}  Let $\{ \und S^{(1)}, \dots, \und S^{(k)}\}$
be $k$ compatible configurations (where $k>1$ ) of rectangles and triangles;    $\cR ( \underline S^{(i)}) = \{ \Ga_j^{(i)}, j=1,\dots, n_i\}$
 the contours of the configurations $\underline S^{(i)}$.
If any two distinct $ \Ga_j^{(i)}$ and
 $\Ga_{j'}^{(i')}$ satisfy {\bf P.1},
$$\cR ( \underline S^{(1)}, \dots, \underline S^{(k)} )= \{  \Ga_j^{(i)}, j=1,\dots, n_i; i=1, \dots, k \}. $$
  It was proved in \cite {CFMP} that not  only  {\bf P.0},  {\bf P.1} and   {\bf P.2} can be actually
 implemented by some algorithm $\cR$, but that such algorithm is unique and therefore there is a
bijection between (compatible) configurations of triangles and rectangles and the contours. It is easy  to convince  oneself that the above mentioned proof
holds also for our sets $\{(\und T,\und Q)\}$ when rectangles are also present, cf. section $3.1$ of \cite{CFMP}.

\medskip
\noindent

The following two propositions summarize the relevant properties of the contours:

\begin{prop}
\label{prop:m3.1}
Let $X_\Ga$ denote the event that $\Ga$ is a contour in the configuration $\und{\Ga}(\sigma)$ on $\La$.
Then, uniformly in $\La, \bar\si\in \cS^+_\La$, and for $\gamma$ small enough:
\begin{equation}
\label{eq:prop.m3.1}
\mu_\La(X_\Ga|\bar \si)\le \prod_{T \in \Ga}(e^{-2\la\beta m_\beta^2a_\beta\ln(|T|\gamma)}
e^{-\frac{\beta}{\gamma} (\tilde J-5\tilde\lambda\ln 5)})\prod_{Q \in \Ga} (e^{-{\beta}\epsilon\ell_-\frac{|Q|}{7\ell_+}}3^{\frac{|Q|}{\ell_+}}),
\end{equation}
where $a_\beta=(1-10/(\C m_\beta^2))(1-\psi/m_\beta)^2$, with $\C$ as in \eqref{SS1}, $\tilde J$  and $\epsilon$ are given by (\ref{eq:JJJJ})  and (\ref{eq-prop:mu-rectangles}) respectively.
In particular, if the scales are chosen as in (\ref{eq:deltachoice}), then
\begin{equation}
\label{eq:prop.m3.1b}
\mu_\La(X_\Ga|\bar \si)\le \prod_{S \in \Ga}e^{-[b_\beta\ln(|S|\gamma)+c(\gamma)]},
\end{equation}
where $b_\beta=2\la\beta m^2_\beta a_\beta$
and $c(\gamma)=c(\gamma,\beta) >0$ tends to $\infty$ as $\ga \to 0$.
\end{prop}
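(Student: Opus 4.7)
The plan is to establish \eqref{eq:prop.m3.1} by a Peierls-type comparison adapted to the two-scale structure of $H_\La$ in \eqref{def:H}--\eqref{def:JJ}. The Gibbs weight of configurations carrying the contour $\Ga$ is compared with that of configurations obtained by ``erasing'' $\Ga$, namely by restoring the surrounding $+$ phase throughout $T(\Ga)$; the resulting ratio then splits into a product over the constituents $S\in\Ga$, matching the right-hand side of \eqref{eq:prop.m3.1}. To decouple the constituents, I use Properties \textbf{P.1}--\textbf{P.2} of the algorithm $\cR$: by the separation \eqref{SS1} (with $\C m_\beta^2>10$), distinct contours interact only through the $1/r^2$ tail, and this residual interaction is small enough to be absorbed into the cost of the closest constituent. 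Peeling off outer contours first and treating inner contours inductively inside $T(\Ga')$ reduces the task to bounding, for a fixed $\Ga$, the ratio $\sum_{\si:X_\Ga(\si)}e^{-\beta H_\La(\si|\bar\si)}/Z_\La(\bar\si)$ by a single-constituent product.

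For a triangle $T\in\Ga$ of sign $+$ (the $-$ case being symmetric), I consider the transformation $\Phi_T$ that flips the spins inside $T$. Only pairs with $i\in T$, $j\notin T$ contribute to the energy difference, and since $J_\ga=\ga J^{(0)}+\ga\tilde\la J^{(1)}$ as in \eqref{def:J}, the change splits as $H_\La(\si|\bar\si)-H_\La(\Phi_T\si|\bar\si)=\Delta^{(0)}_T+\Delta^{(1)}_T$. The Kac part $\Delta^{(0)}_T$ yields, by the block free-energy estimates of Section \ref{sec:basic-estimates}, a mean-field interface cost $\tilde J/\ga$ per triangle (with $\tilde J$ as in \eqref{eq:JJJJ}); the $-5\tilde\la\ln 5/\ga$ correction accounts for the boundary self-interaction of $J^{(1)}$ at scale $\ell_+$. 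The long-range part $\Delta^{(1)}_T$ is $\sum_{i\in T,j\notin T}\la\,\si(i)\si(j)/|i-j|^2$, with $|i-j|>\ga^{-1}/2$. Replacing the spins by the $\ell_-$ block averages, which are $\psi$-close to $-m_\beta$ inside $T$ and to $+m_\beta$ outside, and using the elementary estimate $\sum_{d}[1/\max(d,\ga^{-1}/2)-1/(d+|T|)]\sim\ln(|T|\ga)$ at both endpoints of $T$, one obtains $\Delta^{(1)}_T\ge 2\la m_\beta^2 a_\beta\ln(|T|\ga)$. The prefactor $a_\beta=(1-10/(\C m_\beta^2))(1-\psi/m_\beta)^2$ collects the $\psi$-accuracy and the correction from possibly nested inner contours, whose contribution is controlled via \eqref{SS1}.

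For a rectangle $Q\in\Ga$ the condition $\Theta\equiv 0$ on each $\ell_+$-block of $Q$ forces $m^{\ell_-}(\cdot;\si)$ outside the $\psi$-neighborhoods of $\pm m_\beta$ on at least one $\ell_-$-subblock of each $\ell_+$-block. The Kac free-energy estimate of Section \ref{sec:basic-estimates} yields a cost of order $\beta\eps\ell_-$ per such bad block, and a geometric count based on the constraints imposed on $\Theta$ at the boundary of $Q$ (in particular the minimality condition $|Q|/\ell_+\ge 2$ and the permitted ``$\eta=+1$ vs.\ $-1$'' pattern at neighboring $\ell_+$-blocks) gives a density of at least $1/(7\ell_+)$, producing the factor $e^{-\beta\eps\ell_-|Q|/(7\ell_+)}$; the prefactor $3^{|Q|/\ell_+}$ is the entropy enumeration over $\eta\in\{-1,0,+1\}$ on the $|Q|/\ell_+$ blocks of $Q$. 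Taking the product over all $T,Q\in\Ga$ yields \eqref{eq:prop.m3.1}. For \eqref{eq:prop.m3.1b}, the scale choice \eqref{eq:deltachoice} guarantees that the rectangle exponent dominates $\ln 3$ by a factor growing with $\ga^{-1}$, and that the Kac contribution $(\tilde J-5\tilde\la\ln 5)/\ga$ per triangle supplies the $c(\ga)\to\infty$ correction; the uniform bound $e^{-[b_\beta\ln(|S|\ga)+c(\ga)]}$ per constituent follows with $b_\beta=2\la\beta m_\beta^2 a_\beta$.

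The delicate point is the long-range triangle estimate with the precise prefactor $2\la m_\beta^2 a_\beta$. One must decompose $\sum_{i\in T,j\notin T}\la/|i-j|^2$ so as to extract the clean leading $2\ln(|T|\ga)$ term, while uniformly controlling corrections that depend on the spin configuration near the boundary of $T$ and on possibly nested inner contours. The separation \eqref{SS1} is just tight enough to keep these corrections subleading, but this requires a careful balance between $\C$ and the $\psi$-accuracy in \eqref{def:eta}, which is exactly what the explicit form of $a_\beta$ records. Combining this long-range estimate with the Kac free-energy estimate of Section \ref{sec:basic-estimates} to produce a genuine Peierls bound (not merely an \emph{energy} bound) is where the bulk of the technical work lies.
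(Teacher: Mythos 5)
Your plan correctly identifies the ingredients — the Kac interface cost $\tilde J$, the logarithmic contribution from the $1/r^2$ tail, the rectangle cost from ``bad'' $\ell_-$-blocks, and the use of \textbf{P.1}--\textbf{P.2} to decouple constituents — but the architecture you propose for assembling them differs from the paper's in a way that leaves a genuine gap at the center.

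You propose bounding the triangle weight by a spin-flip map $\Phi_T$ and the \emph{energy} difference $H_\La(\si|\bar\si)-H_\La(\Phi_T\si|\bar\si)=\Delta^{(0)}_T+\Delta^{(1)}_T$. But the event $X_\Ga$ is defined through the coarse-grained $\eta$-variables, not through exact spin values; $\Phi_T\si$ need not land cleanly in $\{\eta\equiv +1\}$ near $\partial T$, and the quantity $\tilde J$ you want to extract from $\Delta^{(0)}_T$ is not an energy but a \emph{free-energy} excess (an infimum over coarse-grained profiles, cf.\ \eqref{eq:JJJJ}). The paper avoids this entirely by never comparing spin energies. It first rewrites the Gibbs weight in terms of block magnetizations $m=m^{\ell_0}(\cdot,\si)$ via $Z_\La(m|\bar\si)=e^{-\beta\ga^{-1}[F_\La(m|\bar m)+\cG_\La(m|\bar\si)]}$, introduces the contour free energy $\hat H(\und\Ga)$ of \eqref{def:hatH}, and then observes the one-line Peierls identity
\begin{equation*}
\mu_\La(X_{\Ga_0}|\bar\si)=\frac{\sum_{\und\Ga\sim\Ga_0}e^{-\beta\ga^{-1}\hat H(\und\Ga\cup\Ga_0)}}{Z_\La(\bar\si)}\le e^{-\beta\ga^{-1}\inf_{\und\Ga\sim\Ga_0}[\hat H(\und\Ga\cup\Ga_0)-\hat H(\und\Ga)]}.
\end{equation*}
The whole content is then Lemma \ref{lem:m4.2} (which contracts the multi-contour case to the single-contour one, producing the factor $1-10/(\C m_\beta^2)$ in $a_\beta$) and Lemma \ref{lem:m4.1} (the single-contour lower bound $W(\Ga_0)$), which is proven by the iterative removal scheme \eqref{eq:via}: one peels off constituents $S_i$ one at a time, replacing the profile on $S_i$ by a test profile $m^*$ (constant $\pm(m_\beta)_\ga$ in the bulk, the optimal $\phi$ profile in a boundary layer, sign-flipped on a triangle) \emph{inside the functional $F_\La$}, and controls the residual $\cG_\La$ term by the estimates in Appendix \ref{app:appA0}. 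The ``flip'' you invoke does occur, but on the coarse-grained profile $m$ inside $F_\La$, where it commutes with taking infima; this is precisely how the free-energy interface cost $\tilde J$ and the $2\tilde\la(m_\beta-\psi)^2\ln(|T|\ga)$ term are produced simultaneously (Lemma \ref{thm:k} and \eqref{eq:triangulo}). Your last paragraph concedes that reconciling your energy bound with the free-energy Peierls bound is ``where the bulk of the technical work lies'' — but that reconciliation \emph{is} the proof, and your sketch does not supply it.

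Two smaller inaccuracies worth noting: the density $1/7$ in the rectangle exponent does not come from a separate ``geometric count''; it is the $1/6$ of Lemma \ref{lem:mQ} (itself a loosening of the $1/3$ density coming from Remark 4 and Proposition \ref{prop:mu-rectangles}) further eroded by the $(1-10/(\C m_\beta^2))$ contraction of Lemma \ref{lem:m4.2}. And the $-5\tilde\la\ln 5$ and $3^{|Q|/\ell_+}$ are bookkeeping constants that emerge from the error terms in Lemma \ref{thm:k} (cf.\ the $-5\tilde\la\ln 4-8\tilde\la K\delta_-$ and $-4c\tilde\la\ln\delta_+$ there) and from the slack between $W(\Ga_0)$ and the stated exponent, not from an independent entropy enumeration over $\eta$-patterns.
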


\noindent The proof is given in the next section.
\vskip 1cm
\begin{prop}
\label{prop:m3.2}
For all $b$ sufficiently large, $c>\ln 2$,  $m\ge 3$:
\begin{equation}
\label{eq:prop.m3.2}
\sumtwo{\Ga\ni 0}{|\Ga|=m}\prod_{S\in \Ga} e^{-b[\ln(|S|\gamma)]-c}\le 2 m e^{-b\ln m-(c-\ln 2)}
\end{equation}
\end{prop}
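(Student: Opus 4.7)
The plan is a direct combinatorial estimate. I would parametrize each contour $\Gamma\ni 0$ of size $m$ by (a) the number $n$ of its constituents, (b) the composition $(s_1,\dots,s_n)$ of $m$ into positive integer parts with $s_i=|S_i|/\ell_+$, (c) the type (triangle or rectangle) of each constituent, contributing a factor $2^n$, and (d) the translation, bounded by $m$ since $\Gamma\ni 0$ means $0$ lies in one of the $m$ $\ell_+$-blocks contained in $\bigcup_i S_i$. The relative placements of the constituents inside one contour contribute only a bounded combinatorial weight (absorbable into $e^{-nc}$) thanks to the clustering structure of $\cR$ and properties P.0--P.2.

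The key estimate is a combinatorial lemma: for $b$ sufficiently large,
\[
\sum_{(s_i):\ s_1+\cdots+s_n=m,\ s_i\ge 1}\ \prod_{i=1}^n s_i^{-b}\ \le\ K_b^{\,n-1}\,m^{-b},
\]
with $K_b$ a constant depending only on $b$ (one may take $K_b=2^{b+1}\zeta(b)$). I would prove it by induction on $n$: the inductive step uses $\sum_{s\ge 1} s^{-b}(m-s)^{-b}\le K_b\, m^{-b}$, itself obtained by splitting the range at $s=m/2$ and bounding each half by $(m/2)^{-b}\zeta(b)$. Combined with $(s_i\delta_+)^{-b}\le s_i^{-b}$ (valid once $\delta_+\ge 1$, guaranteed by \eqref{def:ell0} for $\gamma$ small), the left-hand side of \eqref{eq:prop.m3.2} is then dominated by a geometric series in $n$:
\[
m\sum_{n\ge 1}(2e^{-c})^n K_b^{\,n-1}\, m^{-b}\ =\ m^{1-b}\cdot\frac{2e^{-c}}{1-2K_b e^{-c}}\ \le\ 4m\cdot m^{-b}\, e^{-c},
\]
provided $2K_b e^{-c}\le 1/2$, which is ensured by taking $b$ (and correspondingly $c$, where $c>\ln 2$ is supplemented by an implicit quantitative lower bound) sufficiently large. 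The right-hand side above equals $2m\, e^{-b\ln m-(c-\ln 2)}$, which is the claim.

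The main obstacle is the combinatorial lemma itself: a naive estimate of the form $\binom{m-1}{n-1}(m-n+1)^{-b}$ does not yield a uniform $m^{-b}$ once summed over $n$, since it leaves a residual polynomial-in-$m$ factor that ruins the geometric series. The inductive peeling argument is what extracts the $m^{-b}$ cleanly and uniformly in $n$. A secondary, quantitatively delicate point is to ensure that the geometric ratio $2K_b e^{-c}$ stays strictly below $1$; this is precisely what fixes the joint thresholds on $b$ and $c$ in the statement.
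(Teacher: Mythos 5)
The paper does not give a self-contained proof here: it reduces the estimate to Theorem~4.1 of \cite{CFMP}, whose counting lemma is exactly this inequality for contours built from triangles alone, and then observes that allowing each constituent to be either a triangle or a rectangle contributes an extra factor $2$ per element, absorbed by the hypothesis $c>\ln 2$. Your attempt is genuinely different in spirit---it tries to reprove the counting lemma from scratch---and the composition lemma you isolate, namely $\sum_{s_1+\cdots+s_n=m}\prod_i s_i^{-b}\le K_b^{\,n-1}m^{-b}$ with $K_b\sim 2^{b+1}\zeta(b)$, is correct and standard. However, the argument has a real gap at precisely the point you treat most lightly: the assertion that the relative placements of the constituents inside one contour contribute only a bounded combinatorial weight, absorbable into $e^{-nc}$.

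That assertion is not justified by the clustering structure and, as stated, is false. Property {\bf P.1} is a \emph{separation} condition between \emph{different} contours; within a single contour, the algorithm $\cR$ merges two sub-clusters $\Ga_1,\Ga_2$ that lie within distance $\varpi\ell_+\min(|\Ga_1|^3,|\Ga_2|^3)$, which in units of $\ell_+$ is cubic in the aggregated cluster sizes (up to $\varpi m^3$). The number of admissible positions for an attached constituent is therefore polynomial in these sizes, not $O(1)$, and such factors compound along the merge tree of $\cR$. This is exactly the combinatorial work carried out in the appendices of \cite{CFMP}, and it is visible in the paper's remark following the proposition: the quantitative threshold on $b$ involves the geometric parameter $\varpi$, an extra polynomial weight $m^6$, and an effective decay exponent near $b(1-\rho)/2$ rather than $b$ --- the position counting genuinely consumes a share of the decay budget. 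Your composition lemma only tracks the sizes $(s_1,\dots,s_n)$; tracking the positions along the hierarchy is the missing and nontrivial step, and until it is supplied the proposed majorant $m\sum_n(2e^{-c})^nK_b^{\,n-1}m^{-b}$ does not dominate the left-hand side.
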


\begin{proof}
If we do not distinguish between $Q$ and $T$ while counting the contours, the current entropy estimate boils down to
Theorem 4.1 in \cite{CFMP},  which relies only on the properties {\bf P.0}-{\bf P.2} of the contours.
The only difference is an extra combinatorial factor,
since each $S \in \Gamma$ can be a $Q$ or a $T$. This amounts
to have an extra factor $2$ inside the product over $S$,
which can be easily controlled if $c > ln 2$
\end{proof}
\medskip

\noindent {\bf Remark.} A careful examination of the proof of Theorem 4.1 in \cite{CFMP}
(see also the appendices E and F there)
shows that the statement of previous proposition holds for any $b$ such that
\begin{equation}
\label{eq:sarava}
\sum_m m^6 {e^{-\frac{b(1-\rho)}{2} \ln m}} <  \frac{e^{c}}{2\varpi},
\end{equation}
where $\varpi$ is the parameter introduced in \eqref{SS1} and $\rho$ is any number arbitrary small.  This allows to prove our main theorem for
$\beta$ such that $\lambda \beta m_\beta^2 > 7$.

\medskip
\noindent {\em Proof of Theorem \ref {thm:1} for $\bar \sigma \in \mathcal{S}^+$.}
The definition of contours implies that
\begin{equation}
\label{eq:m4}
\mu_\La(\Ii_{\eta(0)\ne 1}|\bar \si)\le \sum_{\Ga\ni 0} \mu_\La(X_\Ga|\bar \si).
\end{equation}
From this we see that the proof of (\ref{eq:th1-a}) in Theorem \ref{thm:1} follows at once from propositions \ref{prop:m3.1}, \ref{prop:m3.2}.
The second inequality follows similar lines.

\vskip 0.5cm
\section{Proof of \eqref{eq:prop.m3.1} of Proposition \ref{prop:m3.1}}
\label{sec:proofof m31}

\noindent To exploit the mean field limit, we need to express the Gibbs measure in terms of the
coarse grained variables $m(\cdot)=m^{\ell_0}(\cdot,\sigma)\in\mathcal{M}_{0,\Lambda}$
introduced in section \ref{sec:coarsegraining}, plus an error term to be controlled
if $\gamma$ is small enough.  We write
\begin{eqnarray*}
Z_\La(m|\bar\si):=\sum_{\si_\La\colon m^{\ell_0}(\cdot,\sigma)=m(\cdot)}e^{-\beta H(\si|\bar\si)}=
e^{-\beta\ga^{-1}[F_\La(m|\bar m)+\cG_\La(m|\bar \si)]},
\end{eqnarray*}
where $F_\La(m|\bar m)$ and $\cG_\La(m|\bar \si)$ are defined as follows:
\begin{eqnarray}
\label{def:Fm}
F_\La(m|\bar m)&:=&\delta_0\sum_{x\in \La\cap \ell_0\mathbb{Z}}f_\beta(m(x))
\\&& +\frac{(\delta_0)^2}{4\gamma}\sum_{x,y\in \ell_0\mathbb{Z}\cap \La}J_\gamma(|x-y|)(m(x)-m(y))^2\nn
\\&&+\frac{(\delta_0)^2}{2\gamma}\sum_{x \in\ell_0\mathbb{Z}\cap\La,y \in \ell_0\mathbb{Z}\setminus \La}J_\gamma(|x-y|)(m(x)-\bar m(y))^2\nn
\\&&+\frac{(\delta_0)^2}{2\gamma}\sum_{x \in\ell_0\mathbb{Z}\cap\La,y \in \ell_0\mathbb{Z}\setminus \La}J_\gamma(|x-y|)(\bar m(y))^2\nn
\end{eqnarray}
with $\bar m(x):=m^{\ell_0}(x;\bar\si)$, and $f_\beta$ the (limiting) mean field free energy
\begin{eqnarray*}
&&f_\beta(m)= -\frac12 m^2 -\frac{1}{\beta}S(m) \,\text{ with }\\\nn
&&S(m)=-\frac{1+m}{2}\ln\frac{1+m}{2}-\frac{1-m}{2}\ln\frac{1-m}{2}, \;\; m\in (-1,1),
\end{eqnarray*}

\begin{eqnarray}
\label{def:cG}
&&\cG_\La(m\mid\bar \si):=\frac{\delta_0}{\beta}\sum_{x \in \Lambda \cap \ell_0\mathbb{Z}} S(m(x)) +\frac{1-\|J_{\ga}\|_0}{2}\delta_0\sum_{x\in \La\cap \ell_0\mathbb{Z}}(m(x))^2 \\
&& -\frac{\ga}{\beta}\ln\bigg(\sum_{\si: m^{\ell_0}(\si)=m}
\exp\bigg\{-\frac{\beta}{2}\sumtwo{x, y\in\ell_0\Zz\cap \La}{i\in C_x^0,j\in C_y^0}
(J_\ga(|i-j|)-J_\ga(|x-y|))\si_i\si_j \nn \\
&& -\beta\sumtwo{x \in \ell_0 \mathbb{Z}\cap \La, y \in\ell_0\mathbb{Z}\setminus \La}{i\in C_x^0,j\in C_y^0}(J_\ga(|i-j|)-J_\ga(|x-y|))\si_i\bar\si_j \bigg\} \bigg), \nn
\end{eqnarray}
where
\begin{equation*}
\|J_{\ga}\|_0 :=\ell_0\sum_{i \in \mathbb{Z}}J_\gamma(i\ell_0).
\end{equation*}
Notice that $\|J_{\ga}\|_0 = 1 + O(\delta_0)$ and that $J^{(1)}$ does not contribute to the mean field limit, as
seen at once from (\ref{def:J}).
For any $\bar \si\in  \cS^+$  and $\und \Ga$ a compatible configuration of contours
in $\La$, we write:

\begin{eqnarray}
\label{def:hatH}
\hat H_{\bar\sigma}(\und\Ga)= -\frac{\ga}{\beta}\ln \sum_{m\in \cE_\La(\und \Ga)}
 e^{-\beta\ga^{-1} [F_\La(m|\bar m)+\cG_\La(m|\bar \si)]},
\end{eqnarray}
where $\cE_\La(\und \Ga)$ stands for set of possible profiles $m^{\ell_0}(\cdot)$ which give rise
to such configuration of contours. Hence,
\begin{eqnarray}
\label{DEF:H} Z_\La(\bar\si)=\sum_{\und \Ga}
e^{-\beta\ga^{-1}\hat H_{\bar\sigma}(\und\Ga)}
\end{eqnarray}
where the sum is over all the possible sets of compatible configurations
of contours in $\La$.

\noindent {\bf Notation.} Since $\bar\sigma \in \mathcal {S}^+$ is fixed in the derivation below and the estimates do not depend
on its value, we omit the subscript in $\hat H_{\bar\sigma}(\und\Ga)$ in the sequel.

The next lemmas summarize basic properties of $\hat H(\und\Ga)$. The first will be
proven in the next section, and the second follows easily from the same arguments
as in \cite{CFMP}.

\begin{lemma}
   \label{lem:m4.1}
Let $(\und T,\und Q)$ be a configuration in $\La$ with a unique contour $\Ga_0$. Then, for $\gamma$ small enough:
    \begin{equation}
\label{eq:lem-m4.1}
\hat H(\Ga_0)- \hat H(\emptyset)\ge W(\Ga_0):=
\sum_{Q\in \Ga_0} \frac{\delta_-\epsilon}{7}\frac{|Q|}{\ell_+}
+ \sum_{T\in \Ga_0} \left({2\tilde\la} (m_\beta-\psi)^2\ln(|T|\gamma)+\tilde J-5\tilde\lambda\ln5\right),
\end{equation}
where $\tilde J$ and $\epsilon$ are the same as in Proposition \ref{prop:m3.1}.
\end{lemma}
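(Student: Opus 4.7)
My plan is to bound $\hat H(\Ga_0)-\hat H(\emptyset)$ from below by comparing the restricted sums in \eqref{def:hatH}. For $\ga$ small, $\hat H(\und\Ga)=-\frac{\ga}{\beta}\ln Z_{\und\Ga}$ equals, up to $o(1)$ corrections that will be absorbed in $\cG_\La$, the minimum of $F_\La+\cG_\La$ over $\cE_\La(\und\Ga)$. Since $\bar\si\in\cS^+$ and the profile $m\equiv m_\beta$ lies in $\cE_\La(\emptyset)$, one has $F_\La(m_\beta|\bar m)=0$ modulo boundary terms controlled by the decay of $J^{(1)}$, so $\hat H(\emptyset)$ reduces to the value of $\cG_\La$ at this reference profile. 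The problem thus reduces to showing
\[
F_\La(m|\bar m)-F_\La(m_\beta|\bar m)\ge W(\Ga_0) \qquad \forall\,m\in\cE_\La(\Ga_0),
\]
together with a uniform bound $|\cG_\La(m|\bar\si)-\cG_\La(m_\beta|\bar\si)|\ll W(\Ga_0)$. The latter follows from the Kac-scale entropy estimate of \cite{e-book}: the block-spin replacement error in \eqref{def:cG} is $O(\ga\log\ga^{-1})$ per $\ell_0$-block and the term $(1-\|J_\ga\|_0)/2$ is $O(\delta_0)$, whereas $W(\Ga_0)$ scales with $\delta_-$ or with a bounded constant, so the choice of scales in \eqref{def:ell0} makes $\cG$ a negligible fraction of $W$.

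For the rectangle contribution, I work block by block inside each $Q\in\Ga_0$. Because $Q$ is a maximal run of $\ell_+$-blocks with $\eta=0$, in every $\ell_+$-sub-block $C_x^+$ there exists $y\in C_x^+\cap \ell_-\Zz$ with $|m^{\ell_-}(y)\pm m_\beta|\ge\psi$. Two cases arise: either the $\ell_-$-averaged magnetization over $C_y^-$ stays in a region where $f_\beta(\cdot)-f_\beta(m_\beta)\ge\eps$, so the mean-field part of $F_\La$ pays a cost $\ge\eps\delta_-$; or $m^{\ell_-}(y)$ is close to $\pm m_\beta$ across the block but differs from $m_\beta$ at the block boundary, in which case the quadratic gradient term $\frac{\delta_0^2}{4\ga}\sum J_\ga(|x-y|)(m(x)-m(y))^2$ pays at least $\eps\delta_-$. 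A partitioning argument as in Section 4 of \cite{CFMP} shows that the cost from at most one in every seven consecutive $\ell_+$-blocks suffices to cover the total, yielding $\frac{\delta_-\eps}{7}|Q|/\ell_+$ per rectangle.

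For the triangle contribution, the geometric construction of triangles guarantees that inside $T$ the profile $m$ is close to $+m_\beta$ on an $\ell_+$-measurable subset $I_+\subset T$ and close to $-m_\beta$ on an $\ell_+$-measurable subset $I_-\subset T$, with $|I_\pm|$ a fixed fraction of $|T|$. Restricting the gradient term in $F_\La$ to pairs straddling the two phases and keeping only the long-range piece $\tilde\la \ga J^{(1)}(\ga(i-j))$ gives, after undoing the block averages (using that $m(x)\approx \pm m_\beta$ up to the error $\psi$),
\[
\ge 2\tilde\la (m_\beta-\psi)^2 \sum_{i\in I_+,\,j\in I_-}\frac{1}{|i-j|^2}\cdot\ga,
\]
and the classical one-dimensional identity $\sum_{i\in I_+, j\in I_-}|i-j|^{-2}=\ln|T|+O(1)$ produces the logarithm $2\tilde\la(m_\beta-\psi)^2\ln(|T|\ga)$. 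The additive term $\tilde J-5\tilde\la\ln 5$ is a base cost coming from the short-range kernel $\ga J^{(0)}$ across the interface and from the evaluation \eqref{eq:JJJJ} of the constant $\tilde J$; the subtraction $5\tilde\la\ln 5$ accounts for replacing the full $1/r^2$ summation by its tail beyond the short-range cut-off $(2\ga)^{-1}$.

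The main obstacle is handling the nested compatibility relations \eqref{compTT}--\eqref{compTQ}: rectangles can lie inside triangles, triangles can be nested, and the gradient energy must be attributed to the correct constituent without double counting while still recovering the full logarithm $\ln(|T|\ga)$ for each triangle. I plan to address this by partitioning each cross-pair $(i,j)$ appearing in the gradient sum according to the smallest element of $\Ga_0$ whose interval contains both endpoints: pairs internal to a sub-constituent contribute to that constituent's bound, while pairs straddling a boundary go to the outermost relevant triangle, the positivity of the gradient ensuring that removing already-counted contributions only improves the lower bound. The secondary technical point is to track the $\psi$-precision carefully so that the constant $(m_\beta-\psi)^2$ appears in the final estimate rather than a weaker factor.
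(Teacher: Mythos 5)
There is a genuine conceptual gap in the triangle contribution that would make the plan fail as written. You assert that inside a triangle $T$ both phases are present (``$m$ is close to $+m_\beta$ on $I_+\subset T$ and close to $-m_\beta$ on $I_-\subset T$ with $|I_\pm|$ a fixed fraction of $|T|$'') and then extract the logarithm from the long-range sum $\sum_{i\in I_+,j\in I_-}|i-j|^{-2}$. This is not the correct geometry: for the \emph{smallest} triangle --- which is the object one must bound first in the recursion --- the interior of $T$ carries a single sign (up to rectangles where $\eta=0$), while the phase opposite to $T$'s interior sits \emph{outside} $T$, in the blocks adjacent to its two endpoints. The logarithmic energy $2\tilde\la(m_\beta-\psi)^2\ln(|T|\gamma)$ comes from the interaction between $T$ and $\Delta^c=\Lambda\setminus(T\cup Q_l\cup Q_r)$, as one sees in \eqref{eq:triangulo} and Appendix~\ref{app:triangulo}: the paper flips the profile on $T$ (via the map $m\mapsto m^*$ in Lemma~\ref{thm:k}), and the gain is precisely $-\gamma\ell_0^2\sum_{x\in T,\,y\notin\Delta}J_\gamma(|x-y|)m(x)m(y)$, which is positive because the sign alternates across $\partial T$. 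If one tries to extract the $\ln|T|$ from ``cross-pairs inside $T$'' one would, for a smallest triangle, find no such pairs, and for a nested triangle one would be charging interfaces that actually belong to the sub-triangles, which is exactly the double-counting problem the construction is designed to avoid.

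Two further points are glossed over. First, the reduction $\hat H \approx \min(F_\La+\cG_\La)$ is not the comparison mechanism used here: $\hat H$ is $-\frac{\gamma}{\beta}\ln$ of a sum over profiles, and what the paper actually compares is $\hat H(\Gamma)$ with $\hat H(\Gamma\setminus S;\mathcal{A}(S))$ via an explicit map $m\mapsto m^*(m)$ supported on $S$ (so the entropy outside $S$ cancels, and the multiplicity on $S$ contributes the controlled $\frac{\gamma}{\beta}\frac{|Q|}{\ell_0}\ln\ell_0$ term --- see \eqref{eq:infQ}). Second, the inductive structure of \eqref{eq:via} is essential and absent from your plan: one first removes isolated rectangles (Lemma~\ref{lem:mQ}), then iteratively removes the smallest remaining triangle together with its attached rectangles (Lemma~\ref{thm:k}). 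Your proposal of ``partitioning cross-pairs by the smallest containing constituent'' does not address the attached-rectangle coupling and does not give a clean inductive step; the paper's ordering is what lets each triangle's estimate be proven in a clean ``no smaller triangle inside'' situation. The ``one-in-seven'' block counting for rectangles also does not match Remark~4 / Proposition~\ref{prop:mu-rectangles} (the argument is ``one in three'', with the $1/6$ and $1/7$ arising from later corrections), though this is a minor inaccuracy compared with the triangle issue.
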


\begin{lemma}
   \label{lem:m4.2}
Let $\und \Ga\cup \Ga_0$ be a compatible configuration of contours, then:
    \begin{equation}
\label{eq:lem-m4.2}
\hat H(\und\Ga \cup\Ga_0)- \hat H(\und \Ga)\ge W(\Ga_0)\left(1-\frac{10}{\C m_\beta^2}\right)
\end{equation}
$W(\Ga_0)$ the same as defined in Lemma \ref{lem:m4.1} and $\C$ is the same constant appearing  in \eqref{SS1}.
\end{lemma}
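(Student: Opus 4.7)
\textbf{Proof plan for Lemma \ref{lem:m4.2}.} The strategy is to adapt the one-contour argument of Lemma \ref{lem:m4.1} so as to compare $\hat H(\und\Ga\cup\Ga_0)$ with $\hat H(\und\Ga)$ rather than with $\hat H(\emptyset)$. Writing
\[
\hat H(\und\Ga\cup\Ga_0)-\hat H(\und\Ga)=-\frac{\ga}{\beta}\ln\frac{\sum_{m\in\cE_\La(\und\Ga\cup\Ga_0)}e^{-\beta\ga^{-1}[F_\La(m|\bar m)+\cG_\La(m|\bar\si)]}}{\sum_{m\in\cE_\La(\und\Ga)}e^{-\beta\ga^{-1}[F_\La(m|\bar m)+\cG_\La(m|\bar\si)]}},
\]
I would construct a map $m\mapsto\tilde m$ from $\cE_\La(\und\Ga\cup\Ga_0)$ into $\cE_\La(\und\Ga)$ that locally ``erases'' $\Ga_0$: on each triangle $T\in\Ga_0$ one sets $\tilde m\equiv\pm m_\beta$ according to the sign of $T$, and on each rectangle $Q\in\Ga_0$ one chooses a mean-field typical profile matching the surrounding $\eta$-values dictated by $\und\Ga$. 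This map is injective with bounded multiplicity, so the ratio above is dominated by a term-by-term comparison of the local energies and entropies produced by the surgery.

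The per-element contributions are exactly the ones computed in Lemma \ref{lem:m4.1}. Each triangle $T\in\Ga_0$ has opposite dominant magnetization signs on the two external $\ell_+$-blocks adjacent to it, so the quadratic double sum in \eqref{def:Fm} involving the long-range tail $J^{(1)}$ produces a logarithmic cost of at least $2\tilde\la(m_\beta-\psi)^2\ln(|T|\ga)$, while the short-range mean-field contribution and the $\cG_\La$ correction together give the additive $\tilde J-5\tilde\la\ln 5$. Each rectangle $Q\in\Ga_0$ contributes at least $\delta_-\epsilon|Q|/(7\ell_+)$, by the same free-energy argument used in Lemma \ref{lem:m4.1}. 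Summing the local bounds yields $W(\Ga_0)$ up to cross-terms generated by the long-range interactions between $\Ga_0$ and the other contours of $\und\Ga$, which a priori may have the wrong sign and must therefore be controlled.

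These cross-terms are dominated using property \textbf{P.1}: every $\Ga\in\und\Ga$ distinct from $\Ga_0$ satisfies $D(\Ga,\Ga_0)>\C\ell_+\min\{|\Ga|^3,|\Ga_0|^3\}$. The unwanted contribution of the $J^{(1)}$ tail across such a separation is bounded by $\la\int_{\C\ell_+|\Ga_0|^3}^\infty r^{-2}\,dr$ times the total mass of the magnetization oscillations, which, after comparing triangle sizes $|T|$ to $|\Ga_0|^3$ so that $\ln(|T|\ga)$ in $W(\Ga_0)$ sets the correct scale, is bounded by $\frac{10}{\C m_\beta^2}W(\Ga_0)$. The same estimate handles the case of inner contours, using the second clause of \textbf{P.1}: a smaller contour $\Ga$ sits entirely inside a single element $S$ of the bigger one and hence does not interfere with its sign-switch geometry. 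The main obstacle is precisely this inner-contour bookkeeping, namely verifying that the logarithmic gain attributed to a triangle of $\Ga_0$ is not ``used up'' by an inner $\Ga\in\und\Ga$ sitting in one of its constant-magnetization regions; the strict inequality $|\Ga|^3<|\Ga_0|^3$ in \textbf{P.1} is what prevents this, and ultimately fixes the numerical constant $10$ in the loss factor $(1-10/(\C m_\beta^2))$.
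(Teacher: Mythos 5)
Your overall strategy is the one the paper (via its reference to CFMP, section 3.2) actually uses: view $\hat H(\und\Ga\cup\Ga_0)-\hat H(\und\Ga)$ as a ratio of constrained partition functions, perform a local surgery on $\Ga_0$, and then invoke {\bf P.1} to show that the long-range cross-interactions between $\Ga_0$ and the other contours only shave off a factor $10/(\C m_\beta^2)$ from the single-contour bound $W(\Ga_0)$. The cross-term heuristic you give (separation $\ge\C\ell_+\min\{|\Ga|^3,|\Ga_0|^3\}$, compare against the $\ln(|T|\ga)$ scale in $W(\Ga_0)$, use the inner-contour alternative in {\bf P.1}) is exactly the right picture, and matches the role the constant $\C m_\beta^2>10$ plays in the paper.

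There is, however, a genuine problem with the surgery you propose on the triangles. Setting $\tilde m\equiv \pm m_\beta$ on $T$ is a map with multiplicity $\sim(\ell_0+1)^{|T|/\ell_0}$, not the ``injective with bounded multiplicity'' map you claim, and that entropy has to be paid: it contributes a loss of order $\frac{\ga}{\beta}\frac{|T|}{\ell_0}\ln\ell_0$. With the scales of \eqref{eq:deltachoice} and $|T|=n\ell_+$ this loss is of order $n/(\ln\ga^{-1})^2$, which dominates the triangle gain $2\tilde\la(m_\beta-\psi)^2\ln(|T|\ga)\sim\la\ga\ln(n\delta_+)$ for all $n$ and all small $\ga$; so this version of the surgery gives the wrong sign. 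The surgery the paper actually uses on $T$ (cf.\ the definition of $m^*_{T_0}$ in the proof of Lemma \ref{thm:k}) is the spin flip $m(x)\mapsto -m(x)$ for $x\in T$: this is injective, leaves the local entropy and the self-interaction inside $T$ unchanged, and isolates precisely the cross-interaction between $T$ and its complement, which is where the $\tilde J$ and $2\tilde\la(m_\beta-\psi)^2\ln(|T|\ga)$ terms come from. Set-to-constant is fine on the rectangles, where the gain $\delta_-\epsilon|Q|/\ell_+$ beats the $\ln\ell_0$ entropy cost; but for the triangles the spin flip is essential, and without it the per-element estimate you are importing from Lemma \ref{lem:m4.1} would not be available.
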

\begin{proof}
The proof is based on the property {\bf P.1} in the definition of contours, which allows to neglect the
interactions between contours, when $\C$ is chosen sufficiently large so that $\C m_\beta^2$ is larger than 10.
For the explicit estimates we refer to section $3.2$ in \cite{CFMP}.
\end{proof}

\vskip .5cm \noindent
\begin{proof}[Proof of proposition \ref{prop:m3.1}]
If we notice that uniformly in
$\La, \bar\si\in \cS^+_\La$, $\Ga_0$:
\begin{eqnarray*}
\mu_\La(X_{\Ga_0}|\bar \si)=\frac{\sum_{\und \Ga\sim \Ga_0}e^{-\beta \ga^{-1} \hat H(\und\Ga\cup \Ga_0)}}{Z_\Lambda(\bar\sigma)}\le
e^{-\beta \ga^{-1}\inf_{\und \Ga\sim \Ga_0}[ \hat H(\und\Ga\cup \Ga_0)-\hat H(\und \Ga)]},
\end{eqnarray*}
the proof of \eqref{eq:prop.m3.1}
follows from Lemma \ref{lem:m4.2} given that the parameter $\C$ has been chosen as above.
\end{proof}

\vskip 0.5cm \noindent

\noindent {\em Strategy of the proof of Lemma  \ref{lem:m4.1}}

We are assuming that there is a unique contour $\Ga$. It consists in a set of triangles and rectangles $\Ga=\{S_i\}_{i=1,\dots,n}$. We will proceed in the estimate
of Lemma \ref{lem:m4.1}
iteratively ``removing" elements one at a time. Namely re-writing the l.h.s of \eqref{eq:lem-m4.1} as:
\begin{eqnarray}
\label{eq:via}
[\hat H(\Ga)- \hat H(\emptyset)]&\ge& [\hat H(\Ga)- \hat H(\emptyset;\cA(\Ga)]
\\ \nn
&\equiv&
\sum_{i=1}^n [\hat H(\Ga \setminus \cup_{j<i}S_j; \cA(\cup_{j<i}S_j))-
\hat H(\Ga\setminus \cup_{j\le i}S_j;\cA(\cup_{j\le i}S_j))]
\\ \nn
&\equiv&
\sum_{i=1}^n W_\Ga(S_i)
\end{eqnarray}
where, for any subset $\cD\subset \cM_{0,\La}$ we define:

\begin{eqnarray}
\label{def:hatHxA}
\hat H(\und\Ga;\cD):=-\frac{\ga}{\beta}\ln \sum_{m\in \cE_\La(\und \Ga)\cap \cD}
 e^{-\beta\ga^{-1} [F_\La(m|\bar m)+\cG_\La(m|\bar \si)]},
\end{eqnarray}
and we will define in the sequel  suitable choices for the sets $\cA(\cup_{j<i}S_i)$ depending
only on the support of $\cup_{j<i}S_i$, and such that $\hat H(\emptyset)\le \hat H(\emptyset;\cA(\Ga))$.

\noindent\begin{defin} \label{def:isoltedQ}
A rectangle $Q$ in a configuration $(\und T, \und Q)$ is called {\em ``isolated rectangle"}
if $D(Q,T)\ge \ell_+$ for any triangle $T$ in $(\und T, \und Q)$.
Otherwise we say that it is an {\em ``attached rectangle"}.
\end{defin}


In the next section we will prove these basic estimates, namely the lower bound
for the two types of elements $S$, labeled as outlined above.

\vskip .5cm \noindent
\section{Basic estimates}
\label{sec:basic-estimates}

\noindent Let  $\Ga= \{S_i\}_{i=1,\dots n}$ be a contour, which we assume to be the only contour of the whole configuration in $\La$.

\noindent {\em Choice of the scales.} The choice of $\delta_0,\delta_-,\delta_+$ as functions of $\gamma$ is not  strict, but we fix
here some suitable values (in terms of $\ga$) as follows:
\begin{eqnarray}
\label{eq:deltachoice}
\delta_0=\ga^{1/2} \hskip1cm \delta_-=\frac{1}{\ln\ga^{-1}} \hskip1cm \delta_+=\ga^{-1/2}\frac{1}{(\ln\ga^{-1})^3}.
\end{eqnarray}

Given $\epsilon>0$, the following inequalities hold for $\gamma$ sufficiently
small:
\begin{eqnarray}
&& \delta_-\epsilon> \frac{\delta_+}{\ell_0}
\\
&& \delta_-\epsilon>\tilde \la \ln \delta_+
\\
&& \delta_-\epsilon>\frac{\delta_+}{\ell_0} \ln\ell_0
\\
&& \delta_-\epsilon>\delta_0 \delta_+.
\end{eqnarray}
In particular, for $\ga$ sufficiently small and for any $n$, $n\delta_-\epsilon>\tilde \la \ln (n\delta_+)$.
(We use $\epsilon$ as in (\ref{eq-prop:mu-rectangles}).)

\noindent {\bf Remark.} In the following derivations,  $c, c^\prime, ...$ indicate constants whose value is not truly important (even if depending sometimes on the parameter $\beta$) and may change from place to place.
\vskip 0.5 cm.
\centerline{\em The  first basic estimate}

Following the strategy outlined in  \eqref{eq:via} when $S_1$ is an isolated rectangle and
$\cA(S_1)=\cM_{0,\La}$, the whole set of possible profiles $m$. Hence
$\hat H(\Ga;\cM_{0,\La})\equiv \hat H(\Ga)$.

\begin{lemma}
   \label{lem:mQ}
There exist a constant $c'_\beta$ so that for any isolated rectangle $Q$ and any contour
$\Ga$ which contains $Q$:

    \begin{equation}
\label{eq:lem:mQ}
\hat H(\Ga)-\hat H(\Ga\setminus Q)\ge \frac{|Q|}{6\ell_+} c'_\beta \delta_-\psi^3.
\end{equation}\end{lemma}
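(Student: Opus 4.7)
The plan is to lower bound the free energy increment $\hat H(\Gamma) - \hat H(\Gamma \setminus Q)$ via a variational argument. For each admissible profile $m \in \mathcal{E}_\Lambda(\Gamma)$, I would construct a reference $\tilde m \in \mathcal{E}_\Lambda(\Gamma \setminus Q)$ that agrees with $m$ outside $Q$ and equals the ambient mean field value $\pm m_\beta$ inside $Q$. The sign is unambiguous because $Q$ is isolated (at distance $\ge \ell_+$ from every triangle in $\Gamma$) and because, by definition of a rectangle as a maximal run of $\Theta = 0$ blocks, the $\ell_+$-blocks just outside $Q$ carry a well-defined value $\eta^\psi = \pm 1$; moreover the modification does not affect any other contour element of $\Gamma$. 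Standard comparison yields
\begin{equation*}
\hat H(\Gamma) - \hat H(\Gamma \setminus Q) \ge \inf_{m \in \mathcal{E}_\Lambda(\Gamma)} \bigl\{F_\Lambda(m|\bar m)+\mathcal{G}_\Lambda(m|\bar\sigma) - F_\Lambda(\tilde m|\bar m) - \mathcal{G}_\Lambda(\tilde m|\bar\sigma)\bigr\} - \tfrac{\gamma}{\beta}\ln N_Q,
\end{equation*}
where $N_Q$ bounds the multiplicity of profiles coinciding with $\tilde m$ outside $Q$.

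The main energy gain comes from the leading mean field term $\delta_0 \sum_x f_\beta(m(x))$ of $F_\Lambda$. Since $\pm m_\beta$ are the strict minima of $f_\beta$ with positive second derivative, Taylor expansion gives $f_\beta(m)-f_\beta(\pm m_\beta) \ge c_\beta (m\mp m_\beta)^2$ on $|m\mp m_\beta|\le m_\beta/2$. For each $\ell_+$-block $h \subset Q$ with $\eta^\psi(h)=0$, definition \eqref{def:eta} forces an $\ell_-$-subblock $y\in C^+_h$ with $|m^{\ell_-}(y)\mp m_\beta|\ge\psi$ for both signs; a Markov-type argument then produces a set of $\ell_0$-cells of total size $\ge c\psi\ell_-$ on which $|m(x)-m_\beta|\ge\psi/2$, contributing at least $c'_\beta \delta_-\psi^3$ to $\delta_0\sum_x f_\beta(m(x))$. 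For $\ell_+$-blocks of $Q$ with $\eta^\psi(h)\in\{\pm 1\}$ but $\Theta(h)=0$ (phase mismatch with a neighbor), a comparable or larger cost is supplied by the gradient term $\tfrac{\delta_0^2}{4\gamma}\sum J_\gamma(m(x)-m(y))^2$ evaluated across the sharp $\pm m_\beta \to \mp m_\beta$ interface. Summing these contributions over the $|Q|/\ell_+$ blocks of $Q$ produces the prefactor $|Q|/(6\ell_+)$.

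Finally, the correction $\mathcal{G}_\Lambda$ is controlled by the Kac-type estimates from \cite{CP, e-book}, contributing only $O(\delta_0)$ per $\ell_0$-cell, while the entropic term $\tfrac{\gamma}{\beta}\ln N_Q$ is at most $\gamma(|Q|/\ell_0)\ln 2$. Under the scale choice \eqref{eq:deltachoice}, both are negligible compared to $c'_\beta\delta_-\psi^3|Q|/\ell_+$ and may be absorbed into the constant.

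The main obstacle is the case analysis inside $Q$: the Taylor bound applies cleanly only to blocks with $\eta^\psi=0$, while blocks where $\eta^\psi \in\{\pm 1\}$ disagrees with a neighbor must be handled via the nonlocal gradient term, which produces a much larger cost of order $m_\beta^2$. Reconciling these two mechanisms into a single uniform bound of the form $\delta_-\psi^3$ per $\ell_+$-block, and verifying that at least a constant fraction (hence the $1/6$) of the $\ell_+$-blocks of any rectangle admits such a bound, is the combinatorial heart of the argument and directly parallels the analogous analysis for rectangles of zeros in \cite{CFMP}.
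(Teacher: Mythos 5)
Your variational framework and the role of Proposition \ref{prop:mu-rectangles} are essentially in the right place, but the mechanism you propose for extracting the free--energy gain from the $\ell_+$--blocks of $Q$ with $\eta^\psi=0$ has a genuine error, and it is not a peripheral detail but the crux of the lemma.

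You claim that $\eta^\psi(h)=0$ ``forces an $\ell_-$--subblock $y\in C^+_h$ with $|m^{\ell_-}(y)\mp m_\beta|\ge\psi$ for both signs,'' and that this yields $\ell_0$--cells where $|m(x)-m_\beta|\ge\psi/2$, contributing $c'_\beta\delta_-\psi^3$ to $\delta_0\sum_x f_\beta(m(x))$. Both steps fail. First, $\eta^\psi(h)=0$ only means that the two suprema $\sup_y|m^{\ell_-}(y)-m_\beta|$ and $\sup_y|m^{\ell_-}(y)+m_\beta|$ both exceed $\psi$; they can be achieved at \emph{different} $y$'s. A block with $m^{\ell_-}\approx +m_\beta$ on one half and $m^{\ell_-}\approx -m_\beta$ on the other has $\eta^\psi=0$, yet there is no single $y$ far from both $\pm m_\beta$. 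Second, and more importantly, $|m(x)-m_\beta|$ being large does \emph{not} imply a positive excess of $f_\beta$: by the spin--flip symmetry of $f_\beta$, a cell at $m(x)\approx -m_\beta$ has $|m(x)-m_\beta|\approx 2m_\beta$ but $f_\beta(m(x))-f_\beta(m_\beta)\approx 0$. So in exactly the dangerous ``mixed--sign, good--magnitude'' configuration, your local Taylor term gives essentially nothing, and your proposed dichotomy (local cost for $\eta^\psi=0$ blocks, gradient cost only for $\eta^\psi\in\{\pm1\}$, $\Theta=0$ blocks) leaves a hole.

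The paper closes this hole by a different split \emph{within} the $\eta^\psi=0$ blocks: the event $\mathcal{A}_h$ where some $\ell_-$--average has $\big||m^{\ell_-}|-m_\beta\big|>\psi$ (note the outer absolute value --- deviation of the magnitude, not of the signed value), which is controlled by the local term $f_\beta$, and the complementary event $\mathcal{B}_h$ where all $\ell_-$--averages have magnitude within $\psi$ of $m_\beta$ but signs necessarily change (else $\eta^\psi\ne 0$), which is controlled by the nonlocal term $\frac{\delta_0^2}{4\gamma}\sum J_\gamma (m(x)-m(y))^2$. Remark 3 handles the sign mismatch between consecutive blocks, Remark 4 counts a contribution from at least one in every three $C^+$--blocks of the rectangle, and Proposition \ref{prop:mu-rectangles} shows $\min\{\epsilon_a,\epsilon_b\}\ge c'_\beta\psi^3\delta_-$. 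In the $\epsilon_a$ estimate the local term is supplemented by the interaction term between $\ell_0$--cells of opposite sign precisely because, as above, the local term alone cannot see the sign. You would also need the paper's $\phi_\Delta$--matching at $\partial Q$ (Lemma \ref{thm:errico}, Corollary \ref{corol:errico}) rather than an abrupt $\pm m_\beta$ plateau, to keep the long--range boundary mismatch down to $O(\tilde\lambda\ln\delta_+)$; but that is a secondary issue next to the sign/magnitude confusion above.
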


{\bf Remark 1}\label{remark2}
Let $Q=[h, k)$ with $h, k \in \ell_+\mathbb{Z}\cap \La$ be an isolated rectangle.
By definition $\Theta(j)=0$ for any $j\in Q\cap\ell_+\mathbb{Z}$ and $\Theta(h-\ell_+)=\Theta(k)\neq 0$.
Let us assume, without loss of generality that this common value is $+$, the
opposite case being completely analogous. In this case the following
holds for the  $\eta$-variables:
\begin{equation*}
\eta(h-2\ell_+)=\eta(h-\ell_+)=\eta(h)= \eta(k-\ell_+)=\eta(k)=\eta(k+\ell_+)=+1
\end{equation*}
and  $|\eta(i-\ell_+)+\eta(i) +\eta(i+\ell_+)|\le 2$ for any $i\in [h,k-\ell_+) \cap \ell_+ \mathbb{Z}$,
i.e., that there are no three consecutive $C^+$ blocks where $\eta$ has the same sign.

{\bf Remark 2} \label{remark22}
In each $C^+_h$ block with $\eta(h)=0$, we partition the set of possible configurations $m=m^{\ell_0}(\cdot,\sigma)$ as follows:
\begin{itemize}
  \item[(a)]
$\cA_{h}:=\{m^{\ell_-}: \eta(h)=0, \, \sup_{x\in C^+_h\cap \ell_-\mathbb{Z}}\big(\big||m^{\ell_-}(x,\sigma)|-m_\beta\big|\big)>\psi\}$
  \item[(b)]
$\cB_{h}:=\{m^{\ell_-}: \eta(h)=0, \, \sup_{x\in C^+_h\cap \ell_-\mathbb{Z}}\big(\big||m^{\ell_-}(x,\sigma)|-m_\beta\big|\big)\le\psi\}$.
\end{itemize}

We then denote by
\begin{eqnarray}
\label{def:mua}
\epsilon_a &:=& \inf_{m: m^{\ell_{-}}(m)\in \cA_{h}} F_{C^+_h}(m|m) - F_{C^+}(m_\beta|m_\beta)
\\
\label{def:mub}
\epsilon_b &:=& \inf_{m: m^{\ell_{-}}(m)\in \cB_{h}} F_{C^+_h}(m|m) - F_{C^+}(m_\beta|m_\beta).
\end{eqnarray}

 \vskip .5cm \noindent

{\bf Remark 3}\label{remark3}
Notice that, since the interaction energy is positive
(so that enlarging the size of region does not lower the minimum),
the free energy of two contiguous cubes with opposite signs of the variable
 $\eta$ is bigger than or equal to $\epsilon_b$

\vskip .5cm \noindent

{\bf Remark 4}\label{remark4}
By the previous remarks $1,2,3$ we have
that the minimal free energy of a rectangle composed by $n$  $C^{+}$-blocks
is at least $\min\{\epsilon_a,\epsilon_b\} \frac{n}{3}$. In appendix \ref{prop:mu-rectangles} the following proposition
will be proved:

\vskip .5cm \noindent
\begin{prop}
\label{prop:mu-rectangles}
Let $\epsilon_a,\epsilon_b$  be defined as in the Remark 2 above. Then:

\begin{eqnarray}
\label{eq-prop:mu-rectangles}
\epsilon:= \frac{\min\{\epsilon_a, \epsilon_b \}}{\delta_-}> c'_\beta \psi^3
\end{eqnarray}
where $c'_\beta$ is a positive constant.
\end{prop}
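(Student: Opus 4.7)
The plan is to prove the two lower bounds $\epsilon_a, \epsilon_b \ge c'_\beta \psi^3 \delta_-$ using two basic inputs: the non-degeneracy of the minima of $f_\beta$ at $\pm m_\beta$ (recall $\beta>1$), which controls the $\delta_0 \sum_x f_\beta(m(x))$ part of $F_\Lambda$, and the penalty for spatial variation of $m$ carried by the gradient term of $F_\Lambda(m|\bar m)$. Specifically, I would first record the Taylor bound: there exist $\kappa_\beta, \delta_\beta > 0$ with $f_\beta(m)-f_\beta(m_\beta) \ge \kappa_\beta \min\{(m-m_\beta)^2,(m+m_\beta)^2\}$ when $m$ is within $\delta_\beta$ of $\pm m_\beta$, and $f_\beta(m)-f_\beta(m_\beta) \ge \kappa_\beta \delta_\beta^2$ otherwise. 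Since each $\ell_0$-block enters $F_\Lambda$ with weight $\delta_0$, a fraction $\theta$ of the $\ell_-/\ell_0$ small blocks inside an $\ell_-$-block that deviate from $\pm m_\beta$ by $\alpha$ contributes at least $\theta\,\kappa_\beta\,\alpha^2\,\delta_-$ in excess of $F_{C^+}(m_\beta|m_\beta)$.

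For $\epsilon_a$, pick $y\in C^+_h\cap \ell_-\Zz$ with $||m^{\ell_-}(y)|-m_\beta|>\psi$, and partition the $\ell_0$-sites of $C^-_y$ into $A_+, A_-, A_0$ according to whether $m(x)$ lies within $\delta_\beta$ of $+m_\beta$, within $\delta_\beta$ of $-m_\beta$, or at distance $>\delta_\beta$ from both. A brief case analysis then delivers the bound: (i) if $|A_0|$ is a positive fraction of $\ell_-/\ell_0$, the $f_\beta$ cost alone gives $\ge c_\beta \delta_-$; (ii) if (say) $A_+$ dominates, the constraint on the $\ell_-$-average forces the mean of $m(x)-m_\beta$ over $A_+$ to have absolute value $\ge \psi/2$, so the quadratic lower bound on $f_\beta$ yields $\ge c_\beta \psi^2 \delta_-$; (iii) if both $A_+$ and $A_-$ are of positive fraction, the short-range, essentially constant part of $J_\gamma$ in the gradient term produces an $O(1)$ interface cost. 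For $\epsilon_b$, the definition of $\cB_h$ together with $\eta(h)=0$ forces two $\ell_-$-blocks $y_+, y_-\in C^+_h$ with $|m^{\ell_-}(y_+)-m_\beta|\le\psi$ and $|m^{\ell_-}(y_-)+m_\beta|\le\psi$; since $|y_+-y_-|\le\ell_+\ll(2\gamma)^{-1}$, the short-range piece of $J_\gamma$ couples these blocks and the gradient term yields a cost of order $m_\beta^2$, uniformly in $\gamma$ and $\psi$. In every sub-case the resulting bound comfortably exceeds $c'_\beta \psi^3 \delta_-$.

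The main obstacle is sub-case (iii) of $\epsilon_a$, and more broadly the heart of $\epsilon_b$, where the $\ell_0$-averages split between $+m_\beta$-like and $-m_\beta$-like pieces across a narrow transition: the direct $f_\beta$ penalty may be negligible there, and one has to extract a lower bound from the gradient term alone. A convenient way to avoid an explicit one-dimensional surface-tension calculation is to argue that any such transition between $+m_\beta$-like and $-m_\beta$-like regions must contain $\ell_0$-blocks where $m(x)$ lies at distance of order $1$ from $\pm m_\beta$, reducing the problem back to the directly controlled $f_\beta$ regime. Combining the convexity of $f_\beta$ near its minima with the positivity of the gradient term should then produce the uniform $\psi^3 \delta_-$ lower bound claimed in the proposition.
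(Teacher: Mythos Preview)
Your proposal follows the paper's strategy---partition into near-$+m_\beta$, near-$-m_\beta$, and far sites, then balance the local $f_\beta$-cost against the gradient cost---but two of your steps fail because of a scale error. By the hierarchy \eqref{def:ell0} one has $\ell_-\ll(2\gamma)^{-1}\ll\ell_+$, so your claim ``$|y_+-y_-|\le\ell_+\ll(2\gamma)^{-1}$'' in the $\epsilon_b$ argument is simply false: two $\ell_-$-blocks of opposite sign in $C^+_h$ need not be within short range, and even if you choose them \emph{adjacent} (distance $\ell_-$), a single such pair contributes only $\sim\delta_-^2m_\beta^2$ to the gradient term, which is $\ll c'_\beta\psi^3\delta_-$ once $\gamma$ is small. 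The paper gets around this by counting \emph{all} pairs $(u,v)$ of opposite-sign $\ell_-$-blocks at distance $<\gamma^{-1}/3$ across a sign change---there are at least $\sim 1/(3\delta_-)$ of them---so that the total cost is $\gtrsim\delta_-m_\beta^2$.

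The same miscounting bites your case (iii) for $\epsilon_a$. Inside a \emph{single} $\ell_-$-block $C^-_z$ of length $\delta_-\gamma^{-1}\ll(2\gamma)^{-1}$, the gradient cost between the $A_+$ and $A_-$ sites is $\sim n_+n_-\delta_-^2m_\beta^2$, not $O(1)$; and your fallback (that a $\pm$ transition must pass through intermediate values of $m(x)$) is invalid for discrete block averages---one can jump from $+m_\beta$-like to $-m_\beta$-like between consecutive $\ell_0$-blocks. The paper instead keeps both contributions simultaneously: taking $\psi/2$ as the threshold defining $\cN_0,\cN_\pm$, the constraint $||m^{\ell_-}(z)|-m_\beta|>\psi$ forces $n_\pm\le1-\psi/4$, and minimizing $g(n_+,n_-)=A(1-n_+-n_-)+Bn_+n_-$ (with $A\sim\delta_-c_\beta\psi^2$, $B\sim\delta_-^2m_\beta^2$) over this constrained domain yields $\epsilon_a\ge c'_\beta\psi^3\delta_-$. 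You should rework both parts with the correct ordering $\ell_-\ll\gamma^{-1}\ll\ell_+$ and keep track of how many site pairs actually lie within short range.
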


\begin{proof}[Proof of Lemma \ref{lem:mQ}]
{\bf Notation.} For $m\in [-1,1]$, $(m)_\ga$ denotes the best approximation of $m$ in $\cM_{0}$.

In the previous notation, i.e. writing $Q=[h,k)$ for the isolated rectangle as before, we
set (for any configuration $m \in \mathcal {M}_{0,Q}$ )
\begin{itemize}
\item

\begin{eqnarray}
\label{def:m-errico}
\tilde m(x;m)=\left\{
                \begin{array}{ll}
                  m(x), & \hbox{if } \; x\notin \cB(Q)\\
                  (\phi_{\cB(Q)}(x;m_{[\cB(Q)]^c}))_\ga & \hbox{if } \; x\in \cB(Q)
                \end{array}
              \right.
\end{eqnarray}
with $\cB(Q):=\{x\in Q\colon d(x,Q^c)<\ell_+\}$
and $\phi_A(x;m_{A^c})$ the function defined in Lemma
\ref{thm:errico}, that has the following property stated in corollary \ref{corol:errico}:
\begin{eqnarray}
\label{eq:errico}
&&|\phi_\Delta(x)-m_\beta|<e^{-c\frac{1}{3}\delta_+} < 1/{\ell_0} \\ \nn
&&\hskip1cm \forall x\in \Delta:=\{x\in C^+_h:
d(x, [\cB(Q)]^c)>\ell_+/3\}
\end{eqnarray}
where the second inequality in \eqref{eq:errico} follows at once from the choice
of the scales (see \eqref{eq:deltachoice}) for $\gamma$ small.

\medskip

\item
$m^*\equiv m^*_{Q}(m)$, defined as follows:
\begin{eqnarray}
\label{def:mstarQ}
m^*_{Q}(x;m)\equiv m^*(x):=\left\{
          \begin{array}{ll}
            \tilde m(x;m), & \hbox{if }  x\notin \hat Q\\
            s (m_\beta)_\ga, & \hbox{if }  x\in \hat Q
          \end{array}
        \right.
\end{eqnarray}
where
\[
\hat Q:=\{x\in Q\colon d(x,[Q]^c)>\ell_+/3  \}
\]
and $s\in\{- 1,+1\}$ is chosen equal to  $\pm 1$  depending on the sign of the neighboring blocks.

\end{itemize}

\begin{eqnarray}
\label{eq:infQ}
\hat H(\Ga)-\hat H(\Ga\setminus Q)&=&
-\frac{\ga}{\beta}\ln \frac{\sum_{m\in \cE(\Ga)}e^{-\beta\ga^{-1} [F(m|\bar m)+ G(m|\bar \si)]}}
{\sum_{m\in \cE(\Ga\setminus Q^o)}e^{-\beta\ga^{-1}[ F(m|\bar m)+ G(m|\bar \si)]}}
\\ \nn
&\ge&
\inf_{m\in\cE(\Ga) }
\{[F_\La(m|\bar m)+G_\La(m|\bar \si)]-[F_\La(m^*|\bar m)+G_\La(m^*|\bar \si)]\\
&&\hskip3cm 
-  \frac{\ga}{\beta}\frac{|Q|}{\ell_0}\ln \ell_0.
\end{eqnarray}

By lemma \ref{thm:errico} and corollary \ref{corol:errico} we have that:
\begin{eqnarray*}
F_\La(m|\bar m)-F_\La(\tilde m(m)|\bar m)\ge -2c\tilde \la\ln(\delta_+)-\frac{c\delta_+}{\ell_0}
\end{eqnarray*}
so that:

\begin{eqnarray*}
&&\hat H(\Ga)- \hat H(\Ga\setminus Q )\ge
\\
&& \ge
\inf_{m\in\cE(\Ga) }
\{[F_\La(\tilde m(m)|\bar m)+\cG_\La(m|\bar \si)]-[F_\La(m^*|\bar m)+\cG_\La(m^*|\bar \si)]\\
&&\hskip3cm 
-2c\tilde \la\ln(\delta_+)
-  \frac{\ga}{\beta}\frac{|Q|}{\ell_0}\ln \ell_0-c\frac{\gamma|Q|}{\ell_0}.
\end{eqnarray*}
\vskip 1cm \noindent

By remarks $4$, Proposition \ref{prop:mu-rectangles}, equation \eqref{eq:errico}
 and the estimate on $\cG_\La(m|\bar \si)-\cG_\La(m^*|\bar \si)$
proved in appendix \ref{app:appA0} 
we get:
\begin{eqnarray}
\label{eq: Q01}
&&\inf_{m\in\cE(\Ga)}
\{[F_\La(\tilde m(m)|\bar m)+\cG_\La(m|\bar \si)]-[F_\La(m^*|\bar m)+\cG_\La(m^*|\bar \si)]>
\\&&\hskip2cm >\frac{|Q|}{3\ell_+} c'_\beta\psi^3 \delta_- - c\frac{\gamma|Q|}{\ell_0}-\frac{\ga}{\beta}\frac{|Q|}{\ell_0}\ln \ell_0-{c'}\tilde \la\ln(|Q|\gamma)
-c\delta_0\ga|Q|\nn.
\end{eqnarray}
Recalling the choice of $\delta_0,\delta_-,\delta_+$ \eqref{eq:deltachoice}, for $\ga$
sufficiently small we get the result \eqref{eq:lem:mQ}.

\noindent {\bf Remark} This proof does not exploit the hypothesis that the whole configuration consists
of a single contour. It is actually true uniformly over all possible configurations compatible with $Q$.
\end{proof}

\vskip 0.5cm \noindent

\centerline{\em The second basic estimate}

\begin{lemma}
\label{thm:k}
Let us assume that $\Gamma$ is a configuration without isolated rectangles, with a smallest triangle $T$ and two attached rectangles,
 $Q_l, Q_r$ to the left and the right side of $T$. For all such configurations the following estimate holds true:
\begin{eqnarray}
\label{eq:Tbound}
\hat H(\Ga) &-&\hat H(\Ga\setminus [T\cup Q_l\cup Q_r]) \ge\bigg[\tilde J+ \frac{|Q_l|}{6\ell_+}\delta_-\epsilon + \frac{|Q_r|}{6\ell_+}\delta_-\epsilon\bigg]\nn\\&+&2\tilde \la (m_\beta-\psi)^2\ln(|T|\gamma)-c'\tilde \la\ln(|Q_r||Q_l|\ga^2)\\
&-&\ga\frac{|Q_l|+|Q_r|}{\ell_0}(\frac{1}{\beta}\ln\ell_0 +c)
-4c\tilde\la\ln (\delta_+)
-5\tilde\la \ln4-8\tilde\la K\delta_-\nn
\end{eqnarray}
where $\epsilon$ is given by (\ref{eq-prop:mu-rectangles}),
\begin{eqnarray}
\label{eq:JJJJ}
\tilde J:=\lim_{\La\to \Rr}\inf_{m_\La} F^0_\La(m_\La|-[m_\beta]_{\La_-^c},+[m_\beta]_{\La_+^c}) - F^0_\Lambda(m_\beta|m_\beta)
\end{eqnarray}
with the upper index in $F^0_\Lambda$ indicating the functional calculated for $\lambda=0$, $\La_-^c$ and $\La_+^c$ denote respectively the left (right) intervals of $\Lambda^c$, and $c, c', K$ are constants.
\end{lemma}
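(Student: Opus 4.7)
The plan is to adapt the iterative removal scheme of \eqref{eq:via} by treating $\{T, Q_l, Q_r\}$ as a single block to be removed at once, setting $\cA(\{T,Q_l,Q_r\}) = \cM_{0,\La}$. As in the proof of Lemma \ref{lem:mQ}, it then suffices to construct, for every profile $m \in \cE_\La(\Ga)$, a comparison profile $m^\ast \in \cE_\La(\Ga\setminus[T\cup Q_l\cup Q_r])$ agreeing with $m$ outside $T\cup Q_l\cup Q_r$, and to bound from below the free-energy difference
\begin{equation*}
\bigl[F_\La(m|\bar m) + \cG_\La(m|\bar\si)\bigr] - \bigl[F_\La(m^\ast|\bar m) + \cG_\La(m^\ast|\bar\si)\bigr].
\end{equation*}
In the spirit of \eqref{def:mstarQ}, I would set $m^\ast \equiv s(m_\beta)_\ga$ on the interior of $T\cup Q_l\cup Q_r$ (where $s\in\{-1,+1\}$ is the sign of $T$) and smooth across the boundary layer using the interpolation function $\phi$ of Lemma \ref{thm:errico}. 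The difference $\cG_\La(m|\bar\si) - \cG_\La(m^\ast|\bar\si)$ and the discretization error $m_\beta \mapsto (m_\beta)_\ga$ are controlled exactly as in Lemma \ref{lem:mQ}, via Appendix \ref{app:appA0} and Corollary \ref{corol:errico}; these yield the correction terms $\ga(|Q_l|+|Q_r|)(\ln\ell_0/\beta + c)/\ell_0$ and $4c\tilde\la\ln\delta_+$ appearing in \eqref{eq:Tbound}.

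The remaining free-energy difference splits along $J_\ga = \ga(J^{(0)}+\tilde\la J^{(1)})$ into a Kac component $F^{(0)}_\La$ and a long-range component $\tilde\la F^{(1)}_\La$. For $F^{(0)}_\La$, the defining property of the triangle (pairing of two interface points) forces the existence of an $\ell_-$-measurable sub-interval $I\subset T$ on which $m^{\ell_-}\approx -s\,m_\beta$; since the $\ell_+$-blocks adjacent to $T\cup Q_l\cup Q_r$ have sign $s$, the Kac-cost comparison of $m$ and $m^\ast$ reduces, up to vanishing boundary corrections, to the single-interface cost of going from $+s\,m_\beta$ to $-s\,m_\beta$, which is bounded below by $\tilde J$ by definition \eqref{eq:JJJJ}. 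The two rectangle terms $\delta_-\epsilon|Q_l|/(6\ell_+)$ and $\delta_-\epsilon|Q_r|/(6\ell_+)$ are then obtained by applying Proposition \ref{prop:mu-rectangles} to the $\ell_+$-blocks of $Q_l$ and $Q_r$ grouped by threes (as in Remark 4), verbatim as in Lemma \ref{lem:mQ}; the adjacency of the rectangles to $T$ (instead of full isolation) introduces only the additive boundary correction $-c'\tilde\la\ln(|Q_l||Q_r|\ga^2)$ coming from the $1/r^2$ interaction between $T$ and $Q_l, Q_r$ in the reference profile.

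The key long-range contribution $2\tilde\la(m_\beta-\psi)^2\ln(|T|\ga)$ arises from the $1/r^2$ interaction between $I$ and the almost-positive regions flanking $T\cup Q_l\cup Q_r$. Since on both flanks one has $s\,m^{\ell_-}\ge m_\beta - \psi$, substituting the extreme magnetization differences into
\begin{equation*}
\tilde\la\,\ga^{-1}\!\sumtwo{i\in I}{j\in T^c,\ |j-i|>1/(2\ga)} \frac{\bigl(m(i)-m^\ast(i)\bigr)\bigl(m(j)-m^\ast(j)\bigr)}{(i-j)^2}
\end{equation*}
and approximating the discrete double sum by its continuum counterpart $\int_{1/(2\ga)}^{|T|}du/u = \ln(2\ga|T|)$ gives the claimed lower bound. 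The residual terms $5\tilde\la\ln 4$ and $8\tilde\la K\delta_-$ then account for truncating the $1/r^2$ tails at a radius of order $K\delta_-/\ga$ and for the limited precision of $(m_\beta)_\ga$ inside $\cM_0$.

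The main obstacle is the sharpness of the coefficient $2\tilde\la(m_\beta-\psi)^2$: the factor $(m_\beta-\psi)^2$ is optimal only if both flanking regions of the triangle are really within $\psi$ of $\pm m_\beta$, which is guaranteed precisely by the definition of $\eta^\psi$ in \eqref{def:eta}; while the factor $2$ reflects the existence of a reversed-sign sub-interval $I\subset T$ of size comparable to $|T|$, itself a consequence of the Fr\"ohlich--Spencer minimum-distance pairing used to build the triangles in \S\ref{sub:triangles}. The bookkeeping of the many lower-order corrections, so that under the scale choice \eqref{eq:deltachoice} they remain negligible compared to the principal terms, is the remaining technical ingredient; no genuinely new estimate is needed beyond those already assembled in Lemma \ref{lem:mQ} and Proposition \ref{prop:mu-rectangles}.
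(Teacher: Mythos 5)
Your overall scaffolding is similar to the paper's (construct a comparison profile $m^\ast$, split the difference into Kac and long-range parts, control $\cG$ and boundary layers as in Lemma~\ref{lem:mQ}), but the specific choice of $m^\ast$ on $T$ is different and this is not merely cosmetic. You set $m^\ast\equiv s(m_\beta)_\ga$ on the interior of $T$, whereas the paper sets $m^\ast(x)=-m(x)$ for $x\in T$ (a spin flip), keeping $(s m_\beta)_\ga$ only on the bulk of $Q_l,Q_r$ and the interpolating $\phi$ on the boundary layers. The flip is the key device: because $f_\beta$ is even and $(m(x)-m(y))^2$ is invariant under a simultaneous flip, the local free energy and the intra-$T$ interaction cancel exactly between $m$ and $m^\ast$, so the whole contribution from $T$ reduces to the cross-term $-\gamma\ell_0^2\sum_{x\in T,\,y\notin\Delta}J_\gamma(|x-y|)m(x)m(y)$, from which the coefficient $2\tilde\la(m_\beta-\psi)^2\ln(|T|\gamma)$ falls out directly (Appendix~\ref{app:triangulo}). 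With your constant $m^\ast$ on $T$ those cancellations are lost, and while the extra local/intra-$T$ terms are nonnegative and so do not wreck a lower bound, they also do not produce $\tilde J$: inside $T$ the magnetization is uniformly close to $-s\,m_\beta$, there is no interface there, so the ``single-interface cost'' you attribute to $T$ is misplaced.

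This propagates into a real gap in your bookkeeping of $\tilde J$ and the rectangle volume terms. The interfaces of the configuration live inside the attached rectangles $Q_l$ and $Q_r$, not in $T$; the paper bounds the Kac contribution of each $Q_u$ below by $\max\{\tilde J,\ \tfrac{|Q_u|}{3\ell_+}\delta_-\epsilon\}$ and then uses $\max\{a,b\}\ge\tfrac{a+b}{2}$ to convert the two maxima into $\tilde J + \tfrac{|Q_l|}{6\ell_+}\delta_-\epsilon + \tfrac{|Q_r|}{6\ell_+}\delta_-\epsilon$. You instead try to extract $\tilde J$ from $T$ and the volume terms from $Q_l,Q_r$ ``independently''. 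Besides being in the wrong place, that independence is not justified: the block containing the interface is the same block whose $\eta=0$ property feeds into Proposition~\ref{prop:mu-rectangles}, so you cannot simply add $\tilde J$ on top of $\tfrac{|Q_u|}{3\ell_+}\delta_-\epsilon$ without addressing the overlap. Finally, your displayed expression for the long-range contribution is not the correct expansion: for $i\in T$, $j\in T^c$ one must expand $(m(i)-m(j))^2-(m^\ast(i)-m^\ast(j))^2$, and your proposed $\bigl(m(i)-m^\ast(i)\bigr)\bigl(m(j)-m^\ast(j)\bigr)$ even vanishes identically for $j\in\Delta^c$ (since there $m^\ast(j)=m(j)$), so as written it cannot produce the $\ln(|T|\gamma)$ term from the flanking regions. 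In short, your high-level plan is recognizable but the two genuinely structural points of the paper's proof---the spin flip on $T$ and the $\max$-splitting for the attached rectangles---are missing, and these are not details that the scale choice \eqref{eq:deltachoice} can absorb.
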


\begin{proof}
As in the previous proof, letting $\Delta=T\cup Q_l\cup Q_r$ we have
\begin{eqnarray*}
&&\hat H(\Ga) - \hat H(\Ga\setminus [T\cup Q_l\cup Q_r])
\\
&&\hskip 0.5cm \ge\inf_{m\in \cE (\Ga)}
\{[F_{\Delta}(\tilde m(m)|m \circ \bar m)+\mathcal{G}_{\Lambda}(m|\bar\sigma)]-
[F_{\Delta}(m^*(m)|m\circ\bar m)+\mathcal{G}_{\Lambda}(m^*(m)|\bar \sigma)]\} \\
&&\hskip 0.5cm - \frac{\ga}{\beta}\frac{|\Delta\setminus T|}{\ell_0}\ln \ell_0 -c\frac{|\Delta\setminus T|\gamma}{\ell_0}-4c\tilde \la \ln \delta_+,
\end{eqnarray*}
where $m\circ \bar m$ is the configuration which agrees with $m$ in $\La$ and with $\bar m=m^{\ell_0}(\cdot,\bar\sigma)$ outside
$\Lambda$, $\tilde m(m)$ is as in \eqref{def:m-errico}  with $\cB(Q)$ replaced by
$\cB(Q_l)\cup \cB(Q_r)$ and
\begin{eqnarray*}
 m^*_{T_0}(x;m):=\left\{
          \begin{array}{ll}
            m(x), & \hbox{if }  x\notin \Delta\\
            -m(x), & \hbox{if }  x\in  T\\
            (\phi_{B(Q_u)}(x))_\gamma, & \hbox{if }  x\in B(Q_u), \; u=l,r\\
 (sm_\beta)_\gamma, & \hbox{if }  x\in Q_u\setminus B(Q_u), \; u=l,r
          \end{array}
        \right.
\end{eqnarray*}
with $sm_\beta=\pm m_\beta$ according to the sign of the triangle.

The estimate for the contribution of $\cG$ is left to the appendix \ref{app:appA0}. We consider only
\begin{eqnarray*}
\inf_{m\in \cE (\Ga)}
\{F_{\Delta}(\tilde m(m)|\bar m)-
F_{\Delta}(m^*(m)|\bar m)\}&\ge&\inf_{m\in \cE (\Ga)}\{F_{\Delta\setminus T}(\tilde m(m)|\bar m)-
F_{\Delta\setminus T}(m^*(m)|\bar m)\}\\ &+&\inf_{m\in \cE (\Ga)}\{F_{T, \Delta^c}(\tilde m(m))-
F_{ T, \Delta^c}(m^*(m))\},
\end{eqnarray*}
where, for $A, B$ disjoint $\ell_0$-measurable intervals and $m \in \mathcal{M}_{0,\Zz}$:
\begin{eqnarray}
\label{def:Fauxiliar}
F_{A,B}(m)=\delta_0\sum_{x \in A \cap \ell_0\mathbb{Z}}f_\beta(m(x))+ \frac{(\delta_0)^2}{2\gamma}\sumtwo{x \in A \cap \ell_0\mathbb{Z}}{y \in B \cap \ell_0\mathbb{Z}}J_\gamma(|x-y|)(m(x)-m(y))^2.
\end{eqnarray}

Simple computations give the following estimates:
\begin{eqnarray*}
&&\inf_{m\in \cE (\Ga)}\{F_{\Delta\setminus T}(\tilde m(m)|\bar m)-
F_{\Delta\setminus T}(m^*(m)|\bar m)\}\ge \\
&&\hskip 3cm \max\left\{\tilde J, \frac{|Q_r|}{3\ell_+}\delta_-\epsilon\right\}+\max\left\{\tilde J, \frac{|Q_l|}{3\ell_+}\delta_-\epsilon\right\}
- c\tilde\la\ln(|Q_r||Q_l|\ga^2)\\
&&\hskip 3cm - \frac{(|Q_l|+|Q_r|)\gamma c}{\ell_0}
\end{eqnarray*}
\begin{eqnarray}
\label{eq:triangulo}
&&\inf_{m\in \cE (\Ga)}\{F_{ T,\Delta^c}(\tilde m(m)\circ \bar m)-
F_{ T,\Delta^c}(m^*(m)\circ\bar m)\}
\\\nn
&&\hskip 3cm\ge \tilde \la(m_\beta-\psi)^2\ln \frac{|T|^2}{|Q_r||Q_l|}-5\tilde \la\ln4-8\tilde\la K\delta_-.
\end{eqnarray}
The computations are carried out in  appendix \ref{app:triangulo}.
\end{proof}

\vskip .5cm \noindent

\noindent {\em Proof of the statements in Theorem \ref{thm:1} when $\bar\sigma=+\und{1}$}

Having defined the contours as if $\bar \eta \equiv\eta^{\psi}(\bar \si)=+\und{1}$, the difference in the
basic estimates occurs when the contour reaches the boundary of $\Lambda$. Since the external $\bar \si=+\und{1}$
favors the appearance of $\eta(h)=0$ close to the boundary, our previous estimates must be modified for contours
$\Ga$ such that $d(\Ga, \La^c)=1$.  In this case (and now we make explicit the boundary
conditions as subindex of $\hat H$), the estimate in \eqref{eq:lem-m4.1} has to be modified as follows:

\begin{equation}
\label{eq:piu}
\hat H_{+\und{1}}(\Gamma_0) - \hat H_{+\und{1}}(\emptyset) \ge W_{+\und{1}}(\Gamma_0) \ge W(\Gamma_0) - \frac{a(\beta,\lambda)}{\gamma}.
\end{equation}
for a suitable $a(\beta,\lambda)$ which can be taken less than 2. Since Proposition \ref{prop:m3.2}
still holds, the contribution of the contours that contain the origin and touch the boundary is easily controlled, when
$|\Lambda| \ge \exp{\left(\frac{2}{\gamma}\right)}$. This proves the statement.
\vskip 0.5 cm

\section{Final comments.}

\noindent Of course, by the spin flip symmetry, an analogous statement to Theorem \ref{thm:1} holds if $\bar \si \in \mathcal S^-$ (and $\bar\si=-\und{1}$ respectively). As a consequence,
and taking sequential limits $\mu_{\Lambda_n}(\cdot| \bar\si)$ with $\bar\si\in \mathcal{S}^\pm$, we obtain, for $\beta > \bar \beta(\lambda)$, at least two distinct Gibbs measures. It is also known (see \cite{FVV}) that in the present context all Gibbs measures are translationally invariant. One wonders if these limits do not depend on the specific choice of $\bar \si$ and coincide respectively with $\mu^{\pm}_{\beta,\gamma}$ of Theorem \ref{thm:main}. Using e.g. the
relativized Dobrushin criteria (see \cite{e-book}, \cite{BKMP1}) and cluster expansion techniques one should be able to prove this for $\beta$  large as in this paper.
\medskip

\noindent The techniques used in this paper can also be applied to slower decaying interactions (e.g. $\frac{ \la}{r^{2-\al}}$ for $\al<(\frac{\ln 3}{\ln 2}-1)$, see \cite{CFMP}) and  boundary conditions $\bar \sigma\in \cS^\pm$.
In this case (cf. \eqref{def:J}, \eqref{def:JJ}) $\tilde \lambda =\lambda \gamma^{1-\al}$ and $b_\beta$ in \eqref{eq:prop.m3.1b} tends to infinity as $\gamma \to 0$ for any $\beta>1$. Therefore the Peierls bound holds for any $\beta>1$ and $\gamma$ sufficiently small,
 $\beta_c(\gamma)\to 1$ as $\ga \to 0$ and the analogue of \eqref{eq:m3} is valid for any $\beta>1$. The case with boundary conditions $\bar \si\equiv +1$ is not contained because inequalities \eqref {eq:piu} are not valid.

\medskip

\noindent {\bf Acknowledgements:} We thank E. Presutti for suggestions and many discussions on Kac models. The authors thank the following institutions for their warm hospitality which made possible for us to carry out this work: {\em CBPF} and {\em IMPA}, in Rio de Janeiro, the Universities {\em La Sapienza} and {\em Tor Vergata} in Rome, and the {\em University of L'Aquila}.

\noindent Research partially supported by COFIN, Prin n.20078XYHYS Faperj grants E-26/100.626/2007 and E-26/170.008/2008.
I.M.'s work was partially supported by GNFM young researchers project
``Statistical mechanics of multicomponent systems". M.E.V. is partially supported by CNPq grant 302796/2002-9.
\medskip

\bibliographystyle{amsalpha}

\bigskip
\bigskip
\begin{appendices}
\noindent\section{Estimates for $\cG(m|\bar \si)$}
\label{app:appA0}

\noindent Recall \eqref{def:cG}, which we write as $\mathcal{G}(m\mid\bar \sigma)=\mathcal{G}^{(1)}_\La(m) + \cG^{(2)}_\La(m\mid\bar \si)$ with
\begin{eqnarray}
\label{def:cG-app}
&&\cG^{(2)}_\La(m\mid\bar \si):= \frac{\ga}{\beta}\ln \#\{\si_\La\in \cS_\La: m^{\ell_0}(\si_\La)=m \}
\\ \nn
&&  -\frac{\ga}{\beta}\ln\bigg(\sum_{\si: m^{\ell_0}(\si)=m}
\exp\bigg\{-\frac{\beta}{2}\sumtwo{x, y\in\ell_0\Zz\cap \La}{i\in C_x^0,j\in C_y^0}
(J_\ga(|i-j|)-J_\ga(|x-y|))\si_i\si_j \nn \\
&& -\beta\sumtwo{x \in \ell_0 \mathbb{Z}\cap \La, y \ell_0\mathbb{Z}\setminus \La}{i\in C_x^0,j\in C_y^0}(J_\ga(|i-j|)-J_\ga(|x-y|))\si_i\bar\si_j
 \bigg\} \bigg). \nn
\end{eqnarray}

\begin{lemma}
   \label{lemma:G1}
Let $\Delta\subset \La$ be an $\ell_+$-measurable interval. Assume that $m\in\cE_\Delta(\und 1)$ and $\bar \si\in \cS^+$.
Taking
\begin{eqnarray*}
m^*(x)=\left\{
        \begin{array}{ll}
          m(x), & \hbox{if } x\notin \Delta\cap \ell_0\Zz\\
          -m(x), & \hbox{if } x\in \Delta\cap \ell_0\Zz,\\
        \end{array}
      \right.
\end{eqnarray*}
we have
    \begin{eqnarray}
\label{eq:G1}
|\cG_\La(m|\bar \si)-\cG_\La(m^*|\bar \si)|<4
\delta_0 [\tilde \la K +1]
\end{eqnarray}
for some absolute constant $K$.
\end{lemma}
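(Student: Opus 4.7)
\textit{Proof outline.} The first step is to exploit the simple symmetry structure of $\cG_\La$. Writing $\cG_\La=\cG^{(1)}_\La+\cG^{(2)}_\La$ as in \eqref{def:cG-app}, both summands making up $\cG^{(1)}_\La$ (the entropy piece $\frac{\delta_0}{\beta}\sum_x S(m(x))$ and the quadratic piece $\frac{1-\|J_\ga\|_0}{2}\delta_0\sum_x m(x)^2$) are even functions of each $m(x)$, hence are invariant under the sign flip $m\mapsto m^*$. Consequently,
\[
\cG_\La(m|\bar\si)-\cG_\La(m^*|\bar\si)=\cG^{(2)}_\La(m|\bar\si)-\cG^{(2)}_\La(m^*|\bar\si),
\]
and only the logarithmic ``correction'' part must be controlled.

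The second step is to use the spin-flip involution $\tau$ on $\Delta$ defined by $(\tau\si)_i=-\si_i$ for $i\in\Delta$ and $(\tau\si)_i=\si_i$ otherwise. Since $\tau$ is a bijection that sends $\{\si:m^{\ell_0}(\si)=m\}$ onto $\{\si:m^{\ell_0}(\si)=m^*\}$, it preserves the combinatorial count $\#\{\si:m^{\ell_0}(\si)=m\}$ appearing in $\cG^{(2)}$, and transforms the $K(\si,\bar\si)$ in the exponent (the inner double sum of \eqref{def:cG-app}) into $K(\tau\si,\bar\si)$. A straightforward log-sum-exp estimate then gives
\[
|\cG_\La(m|\bar\si)-\cG_\La(m^*|\bar\si)|\le \ga\,\sup_{\si}|K(\si,\bar\si)-K(\tau\si,\bar\si)|.
\]
Only pairs $(i,j)$ with exactly one index in $\Delta$ contribute to the right-hand side, so after bounding $|\si_i\si_j|,|\si_i\bar\si_j|\le 1$ the problem is reduced to estimating
\[
\sum_{x\in \Delta\cap \ell_0\Zz}\;\sum_{y\in \ell_0\Zz\setminus\Delta}\;\sum_{i\in C_x^0,\,j\in C_y^0}|J_\ga(|i-j|)-J_\ga(|x-y|)|.
\]

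The third step is to control this sum using the decomposition $J_\ga=\ga J^{(0)}(\ga\,\cdot)+\ga\tilde\la J^{(1)}(\ga\,\cdot)$ coming from \eqref{def:J}-\eqref{def:JJ}. The Kac piece $J^{(0)}$ is the indicator of $[-\tfrac12,\tfrac12]$, so $J_\ga(|i-j|)-J_\ga(|x-y|)$ vanishes unless $|x-y|$ is within $\ell_0$ of the range threshold $(2\ga)^{-1}$; since $\ell_+\gg(2\ga)^{-1}$, only $x$ within a Kac range of $\partial\Delta$ contribute, and a direct count yields a term of order $\delta_0$ after the prefactor $\ga$. The long-range piece $J^{(1)}$ is smooth with $|(J^{(1)})'(r)|\lesssim r^{-3}$; using $||i-j|-|x-y||\le 2\ell_0$ gives $|J_\ga(|i-j|)-J_\ga(|x-y|)|\lesssim \ga\tilde\la\,\ell_0\,|x-y|^{-3}$, and summing over $i,j$ in the blocks and then over $y$ outside $\Delta$ produces a bound of order $\tilde\la\,\delta_0\,K$, where $K$ is an absolute constant extracted from $\sum_{|n|\ge 1} n^{-3}$ (i.e.\ from the integrable tail of $(J^{(1)})'$). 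Summing the two contributions yields the claimed $4\delta_0[\tilde\la K+1]$.

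The main technical point I would need to check carefully is precisely that last step: one must use the cancellation provided by restricting to pairs crossing $\partial\Delta$, together with the decay of $J_\ga$ far from the boundary, to ensure that the bound is $|\Delta|$-independent rather than linearly growing in $|\Delta|/\ell_+$; the key input is that the fluctuation of $J_\ga$ within an $\ell_0$-block is concentrated either near the Kac cut-off or at the boundary layer of $\Delta$, so that the nominal $|\Delta|$-scaling is absorbed.
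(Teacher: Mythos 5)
Your proposal is correct and follows essentially the same route as the paper: exploit that $\cG^{(1)}_\La$ is even in each $m(x)$, use the spin-flip involution $\tau$ on $\Delta$ to obtain a bijection between the two constraint sets, extract a log-sum-exp bound of the form $\ga\sup_\si|K(\si,\bar\si)-K(\tau\si,\bar\si)|$, and then estimate the surviving interaction terms via the fluctuation of $J_\ga$ within $\ell_0$-blocks, splitting into the Kac cutoff layer and the long-range tail. Your computation correctly yields a $|\Delta|$-independent bound of order $\delta_0[\tilde\la K+1]$, and your final concern about boundary localization is resolved exactly as you expect (the crossing pairs with $|x-y|\approx(2\ga)^{-1}$ for $J^{(0)}$, and the decay of $(J^{(1)})'$ for the tail).

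One small remark worth making: the paper's proof states that ``the contribution to $\cG^{(2)}_\La(m|\bar\si)-\cG^{(2)}_\La(m^*|\bar\si)$ corresponding to $x,y\in\La\cap\ell_0\Zz$ vanishes,'' which is imprecise when $\Delta\subsetneq\La$: under the flip on $\Delta$ only, pairs with one index in $\Delta$ and the other in $\La\setminus\Delta$ do \emph{not} cancel. Your version is more accurate; you correctly identify the surviving terms as those with exactly one index in $\Delta$ (whether the other index is in $\La\setminus\Delta$ or in $\La^c$), and your boundary-layer argument shows these contribute $O(\delta_0)$ regardless, matching the paper's final bound.
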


\begin{lemma}
   \label{lemma:G2}
Let $\Delta\subset \La$, $m\in \cE_\Delta(0)$, and
\begin{eqnarray*}
m^*(x)=\left\{
        \begin{array}{ll}
          m(x), & \hbox{if } x\notin \Delta\cap \ell_0\Zz\\
          \tilde m(x), & \hbox{if } x\in \Delta\cap \ell_0\Zz 
        \end{array}
      \right.
\end{eqnarray*}
where $\tilde m\in \cM_{0,\Delta}$ is an arbitrary profile.
Then
    \begin{eqnarray}
\label{eq:G2}
|\cG_\La(m|\bar \si)-\cG_\La(m^*|\bar \si)|<
2\delta_0(1+\ga \la K )+ \ga(5\la +2\delta_0)|\Delta|
\end{eqnarray}
for $K$ as in \eqref{eq:G1}.
\end{lemma}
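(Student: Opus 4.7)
The plan is to follow the decomposition $\cG_\La = \cG^{(1)}_\La + \cG^{(2)}_\La$ introduced at the start of Appendix \ref{app:appA0}, and estimate the two pieces separately. The first step is to bound $|\cG^{(1)}_\La(m)-\cG^{(1)}_\La(m^*)|$, which reduces to a local sum over $x\in\Delta\cap\ell_0\Zz$: each summand contains the mean-field entropy $S$ (bounded by $\ln 2$), the $m^2$ contribution (with coefficient $|1-\|J_\ga\|_0|=O(\delta_0)$), and the Stirling correction to $\tfrac{\ga}{\beta}\ln\#\{\si\colon m^{\ell_0}(\si)=m\}$, all of size $O(1)$ per $\ell_0$-block. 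Summing over the at most $|\Delta|/\ell_0$ blocks and multiplying through by $\ga$ yields a contribution of order $\ga|\Delta|$ whose constants produce the $2\delta_0\,\ga|\Delta|$ portion of the right-hand side of \eqref{eq:G2}.

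Next I would turn to $\cG^{(2)}_\La$, which has the form
\begin{equation*}
\cG^{(2)}_\La(m|\bar\si)=-\frac{\ga}{\beta}\ln\Ee_m\bigl[e^{-\beta H_{\mathrm{err}}(\si|\bar\si)}\bigr],
\end{equation*}
where $\Ee_m$ is the uniform probability on $\{\si\colon m^{\ell_0}(\si)=m\}$ and $H_{\mathrm{err}}$ collects the kernel differences $J_\ga(|i-j|)-J_\ga(|x-y|)$. Applying the elementary bound $|\ln\Ee(e^X)-\ln\Ee(e^Y)|\le \sup|X-Y|$ reduces the estimate to
\begin{equation*}
|\cG^{(2)}_\La(m|\bar\si)-\cG^{(2)}_\La(m^*|\bar\si)|\le \ga\sup_{\si,\si'}|H_{\mathrm{err}}(\si|\bar\si)-H_{\mathrm{err}}(\si'|\bar\si)|,
\end{equation*}
where $\si$ has profile $m$ and $\si'$ has profile $m^*$; since the two can be taken equal outside $\Delta$, only pairs $(i,j)$ with at least one index in $\Delta$ contribute. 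I would split the remaining sum into a short-range piece (from $J^{(0)}$) and a long-range piece (from $J^{(1)}$). The short-range piece is handled by $|J^{(0)}_\ga(|i-j|)-J^{(0)}_\ga(|x-y|)|\le\ga$ combined with the block geometry, producing the $2\delta_0\,\ga|\Delta|$ term. For the long-range piece, the bulk part (pairs whose blocks are well separated from $\partial\Delta$) is controlled by the uniform estimate $\sum_{j}|J^{(1)}_\ga(|i-j|)-J^{(1)}_\ga(|x-y|)|\le c\la\ga$, and summing over $x$-blocks in $\Delta$ accumulates the $5\la\ga|\Delta|$ contribution.

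The delicate part will be the boundary contribution from the $1/r^2$ tail near $\partial\Delta$, where the differences $J^{(1)}_\ga(|i-j|)-J^{(1)}_\ga(|x-y|)$ also contain interactions with $y\in\ell_0\Zz\setminus\La$: here the integral $\int|r^{-2}-(x-y)^{-2}|\,dr$ generates a logarithmic boundary cost. The argument for Lemma \ref{lemma:G1} already extracts this constant, which is called $K$, and repeating that boundary estimate (now with the arbitrary profile $\tilde m$ replacing the flipped profile $-m$, so that only $|\tilde m|\le 1$ is used) yields the $|\Delta|$-independent term $2\delta_0(1+\ga\la K)$, the factor $2\delta_0$ counting the two edges of $\Delta$. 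Adding the three contributions produces \eqref{eq:G2}. The main obstacle will be verifying that the constant $K$ obtained from the long-range boundary integral coincides with the one in Lemma \ref{lemma:G1} so that the stated bound uses the same $K$, and that the external condition $\bar\si$ enters only through $|\bar\si_j|=1$, so its influence is absorbed into the same estimates without generating additional $|\Delta|$-dependent terms.
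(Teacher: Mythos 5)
Your decomposition $\cG_\La=\cG^{(1)}_\La+\cG^{(2)}_\La$, the Stirling treatment of $\cG^{(1)}$, and the reduction of $\cG^{(2)}$ to a sup of Hamiltonian differences over configurations agreeing outside $\Delta$ are all the same ingredients as the paper's, but there is a real gap at your key step. You invoke ``$|\ln\Ee(e^X)-\ln\Ee(e^Y)|\le\sup|X-Y|$'' as if $\Ee$ were one fixed measure, whereas here the two expectations are $\Ee_m$ and $\Ee_{m^*}$, i.e.\ uniform laws on the \emph{different} sets $\{\si:m^{\ell_0}(\si)=m\}$ and $\{\si:m^{\ell_0}(\si)=m^*\}$, which in general have different cardinalities. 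As literally stated the inequality does not apply, and the cardinality ratio $\#\{m^*_\Delta\}/\#\{m_\Delta\}$ has to be dealt with. The paper does this by writing $\cH_\La=\cH_\Delta+\cH_{\Delta^c}+\cH_{\Delta,\Delta^c}$, pulling $\sup|\cH_{\Delta,\Delta^c}|$ out, and using that the constrained sums factorize over $\Delta$ and $\Delta^c\cap\La$ (because the block constraints are local), so both the $\Delta^c$ partition functions and the $\Delta$-counting factors cancel exactly in the ratio of $\cG^{(2)}$'s, leaving only $\sup|\cH_{\Delta,\Delta^c}|$ and $\sup|\cH_\Delta|$. Your coupling intuition (``take $\si$ and $\si'$ equal outside $\Delta$'') points at the same structure, but to make your step rigorous you must either carry out this factorization or construct the coupling explicitly and check that the factorized uniform conditional laws have the correct marginals; that is the genuine content here. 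A secondary but noteworthy discrepancy: you attribute the $2\delta_0\gamma|\Delta|$ term to $\cG^{(1)}$ and the $5\lambda\gamma|\Delta|$ term to the long-range tail in $\cG^{(2)}$, whereas the paper gets $2\delta_0\gamma|\Delta|$ from $\gamma\sup|\cH_\Delta|$ inside $\cG^{(2)}$ and assigns $5\lambda\gamma|\Delta|$ to $\cG^{(1)}$ (through Stirling and the $(1-\|J_\ga\|_0)$ coefficient, which carries the $\lambda$-dependence); since both allocations are only upper bounds, the arithmetic still closes, but you should be aware that your bookkeeping does not match the source.
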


\begin{proof}[Proof of Lemma \ref{lemma:G1}]
Notice that $\cG^{(1)}_\La(m|\bar \si)=\cG^{(1)}_\La(m^*|\bar\si)$. Also, due to the spin flip symmetry, the contribution to  $\cG^{(2)}_\La(m|\bar \si)-\cG^{(2)}_\La(m^*|\bar\si)$ corresponding to $x, y \in \La \cap \ell_0\mathbb{Z}$ vanishes. On the other hand, for $|| i- j|- | x - y||<  \ell_0 $ we get:
\begin{eqnarray*}
|J_\ga(i,j)- J_\ga(x,y)|<
\max\{\ga, 5\la\ga^2\} \Ii_{[||x-y|-1/(2\ga)|<2\ell_0]}+\frac{3\lambda\ell_0}{|x-y|^{3}}\Ii_{[|x-y|>1/(2\ga)+2\ell_0]}
\end{eqnarray*}
for any $i\in C^0_x, j\in C^0_y$, $x,y\in \ell_0\Zz $. Simple computations show that
\begin{eqnarray*}
|\cG^{(2)}_\La(m|\bar \si)-\cG^{(2)}_\La(m^*|\bar\si)|<\frac{\ga}{\beta} \ln \left(\frac{\sum_{\si: m^{\ell_0}(\si)=m} e^{\beta[2\ell_0+\la K\delta_0]}}{\#\{\si\in \cS_\La: m^{\ell_0}(\si)=m \}}\right)<\delta_0[2+\la K\ga]
\end{eqnarray*}
where $K$ is an absolute  constant.
\end{proof}

\begin{proof}[Proof of Lemma \ref{lemma:G2}]
Let
\begin{equation*}
\cH_A(\si):=\frac{1}{2} \sum_{x, y\in A\cap\ell_0 \Zz}\sumtwo{i\in C_x^0}{j\in C_y^0}
(J_\ga(|i-j|)-J_\ga(|x-y|))\si_i\si_j
\end{equation*}
and
\begin{equation*}
\cH_{A,B}(\si):=\sumtwo{x\in A\cap\ell_0 \Zz}{y\in B\cap\ell_0 \Zz}
\sumtwo{i\in C_x^0}{j\in C_y^0}(J_\ga(|i-j|)-J_\ga(|x-y|))\si_i\si_j.
\end{equation*}
We may write\footnote{$\si\circ \bar \si$ denotes the configuration that agrees with $\si$ in $\La$ and with $\bar \si$ outside
$\Lambda$.}:
\begin{eqnarray*}
&&|\cG^{(2)}_\La(m|\bar \si)-\cG^{(2)}_\La(m^*|\bar \si)|\\
&&\hskip1cm\le\bigg|\frac{\ga}{\beta}\ln\left(2\sup_{\si}e^{-\beta \cH_{\Delta ,\Delta^c}(\si\circ \bar \si)}\frac{
\sum_{\si: m^{\ell_0}(\si)=m}e^{-\beta (\cH_{\Delta}(\si)+ \cH_{\Delta^c}(\si\circ \bar \si))}}{\sum_{\si: m^{\ell_0}(\si)=m^*}
\e^{-\beta (\cH_{\Delta}(\si)+\cH_{\Delta^c}(\si\circ \bar \si))}}\right)\\
&& \hskip 3cm
+\frac{\ga}{\beta}\ln \left(\frac{\sum_{\si_{\Delta}: m^{\ell_0}(\si_{\Delta})=m^*_{\Delta}}
 1}{\sum_{\si_{\si_{\Delta}: m^{\ell_0}(\si_{\Delta})=m_{\Delta}} 1}}\right)\bigg|
\\
&&\hskip1cm\le \bigg|\frac{\ga}{\beta}\ln\left(2\sup_{\si}e^{-\beta \cH_{\Delta ,\Delta^c}(\si\circ \bar \si)}
\sup_{\si_{\Delta}: m^{\ell_0}(\si_{\Delta})=m_{\Delta}}e^{-\beta \cH_{\Delta}(\si)} \sup_{\si_{\Delta}: m^{\ell_0}(\si_{\Delta})=m^*_{\Delta}} e^{+\beta \cH_{\Delta}(\si)}\right)
 \bigg|
\\
&&\hskip1cm\le2\delta_0\bigg|(1+\ga \la K )
+\ga|\Delta| \bigg|.
\end{eqnarray*}
Trivial computations and Stirling formula give
\begin{eqnarray*}
|\cG^{(1)}_\La(m|\bar \si)-\cG^{(1)}_\La(m^*|\bar \si)|<5\la\ga|\Delta|.
\end{eqnarray*}
\end{proof}

\vskip 0.5cm \noindent
\section{Proof of Proposition \ref{prop:mu-rectangles}}
\label{app:}

\begin{proof}[Proof of the proposition \ref{prop:mu-rectangles}]

Recalling the definitions \eqref{def:mua}, \eqref{def:mub}, we
will prove that:
\begin{eqnarray}
\epsilon_a> c^\prime_\beta \psi^3\delta_-
\label{mua-bound}
\\
\epsilon_b>  \frac14 m_\beta^2\delta_-
\label{mub-bound}
\end{eqnarray}
where $c^\prime_\beta$ is a positive constant. Once this is proven, and since  $\psi$ is chosen as indicated right after (\ref{def:eta}), we get \eqref{eq-prop:mu-rectangles}.

\vskip 0.5cm
\noindent We first prove \eqref{mub-bound}. Hence we are assuming that $\eta(h)=0$ and:
\begin{eqnarray*}
||m^{\ell_-}(z)|-m_\beta|<\psi \;\;\; \forall z\in C^+_h\cap\ell_-\Zz.
\end{eqnarray*}
Recalling \eqref{def:Fm} and writing $m(x)$ for $m^{\ell_0}(x)$, we have:
  \begin{eqnarray*}
\label{eq: fc1}
F_{C^+_h}(m|\bar m)-F_{C^+_h}(m_\beta|m_\beta)&\ge&  \frac{\delta _0^2}{2} \sum_{x,y\in C^+_h\cap\ell_0\Zz}(m(x)-m(y))^2 \Ii_{[|\delta_0x-\delta_0 y|<1/2]}\\
&\ge& \sumtwo{u,v\in\ell_0\Zz\cap C^+_h}{|u-v|<\ga^{-1}/3} I_{u,v}(m)
\end{eqnarray*}
where
\begin{eqnarray}
\label{eq; I}
I_{u,v}(m):=\frac{\delta_0^2}{2}\sum_{x\in C^-_u} \sum_{y\in C^-_v}(m(x)- m(y))^2.
\end{eqnarray}

For each $u,v$ such that $|m^{\ell_-}(u)-m_\beta|<\psi$ and $|m^{\ell_-}(v)+m_\beta|<\psi$:
\begin{eqnarray*}
I_{uv}(m)&\ge&\frac{\delta_0^2}{2}\sum_{x\in C^-_u} \sum_{y\in C^-_v}(m(x)- 2m_\beta +2m_\beta-m(y))^2
\\
&\ge&\frac{\delta_0^2}{2}\sum_{x\in C^-_h} \sum_{y\in C^-_k}[(m(x)- m_\beta )^2 +(2m_\beta)^2+ (m(y)+m_\beta)^2- 6\psi]
\\
&\ge&\frac{\delta_-^2}{2}[(2m_\beta)^2- 6\psi].
\end{eqnarray*}

Since $\eta(h)=0$ there are at least  $\frac{1}{3\delta_-}-1$  such pairs $(u,v)$,
and we have the following lower bound:

 \begin{eqnarray*}
F_{C^+_h}(m|\bar m)-F_{C^+_h}(m_\beta|m_\beta)
\ge  \frac{\delta_-}{4}[(2m_\beta)^2- 6\psi]\ge  \frac{\delta_-}{4}m_\beta^2
\end{eqnarray*}
for $\psi<m_\beta^2/2$.

\vskip 0.5cm \noindent

Let us now consider $\epsilon_a$.  In this case, we have at least a block $C^-_{z}$ where \linebreak
$||m^{\ell_-}(z)|-m_\beta|>\psi$. The  main contribution to the free energy in this case comes
from the local contribution on the blocks $C^{0}_x$ with $x\in \ell_0\Zz\cap C^-_{z}$ where $|m(x)-m_\beta|>\psi$, and/or from the interaction
between two blocks $C^{0}_x, C^{0}_y$, $x,y\in \ell_0\Zz\cap C^-_{z}$, with magnetization of opposite signs, close to $\pm m_\beta$.
Recalling that $\delta_-<1/2$,  this last term is due only to the short range interaction, which is constant inside  each block $C^-$.
We consider a lower bound of the free energy by neglecting all other (non-negative) contributions. For this we set:
\begin{eqnarray*}
\cN_0&:=&\{x\in C^-_z \cap\ell_0\mathbb{Z}\colon ||m(x)|-m_\beta|>\psi/2\}
\\
\cN_\pm&:=&\{x\in  C^-_z \cap\ell_0\mathbb{Z}\colon |m(x)-\pm m_\beta|<\psi/2\}.
\end{eqnarray*}

Let $N_0, N_\pm$  denote the cardinality of the sets $\cN_0, \cN_\pm$ respectively, and $n_\pm:=N_\pm\frac{\ell_0}{\ell_-}$.
Hence $N_0+N_+ +N_-=\frac{\ell_-}{\ell_0}$. It is trivial to verify that due to our condition on
$m^{\ell_-}(z)$  we have
\begin{eqnarray}
\label{proptA}
n_\pm\le \left(1-\frac{\psi}{4}\right),
\end{eqnarray}
and we can write (with $n_0=1-n_--n_+$)
\begin{eqnarray*}
F_{C^-_z}(m|\bar m)-F_{C^-_z}(m_\beta|m_\beta)&\ge& n_0 \delta_- (f( m_\beta+\psi) - f(m_\beta))+\delta_-^2 n_-n_+  (2m_\beta-\psi)^2 \nn
\\
&\ge&  \delta_- (1-n_--n_+)c_\beta \frac{\psi^2}{4}+
\delta_-^2 n_-n_+ (2m_\beta-\psi)^2
\end{eqnarray*}
where $c_\beta$ is a lower bound of the second derivative of the mean field free energy $f$.

We then can take the minimum of the r.h.s of the above equation, on the set $\{(n_-,n_+) \colon 0\le n_\pm\le \left(1-\frac{\psi}{4}\right), n_-+n_+\le 1\}$, which gives:
\begin{eqnarray*}
F_{C^-_z}(m|\bar m) - F_{C^-_z}(m_\beta|m_\beta)
&\ge&  \delta_-c'_\beta \psi^3
\end{eqnarray*}
for a suitable positive constant $c'_\beta$.
In fact, the function $g(x,y)=A(1-x-y) +Bxy$ with  $A=\delta_-c_\beta \psi^2$ and $B=\delta_-^2(2m_\beta-\psi)^2$
has a unique critical point in $x=y=\frac{A}{B}$, that, since $B\ll A$, is out of the  domain
 $X+Y\le 1$. This point is  a {\em saddle}. Evaluating the function on the border of the domain, we get the result:
$g(x,y)\ge g(0,(1-\psi/4))= g((1-\psi/4),0)= A\frac{\psi}{4}= \delta_-c'_\beta \psi^3$.
\end{proof}

\vskip 0.5cm \noindent
\section{Proof of equation \eqref{eq:errico}}
\label{app:mu versus Wtildem}

\begin{lemma}
\label{thm:errico}
Let $\Delta\subset \La$ be an ${\ell_0}$-measurable interval. Then there is a constant $c$ and
for any $\bar m_{\Delta^c}\in \mathcal{M}_{0,\Delta^c}$ there exists an $\ell_0$-measurable function $\phi_\Delta(x)\equiv
\phi_\Delta(x;m_{\Delta^c})$ so that:

\begin{equation}
\label{eq:errico0}
\inf_{m_\Delta\in \cE_\Delta(\eta_\Delta=+1)}F_\Delta(m_\Delta|\bar m_{\Delta^c})\ge
F_\Delta(\phi_\Delta(x)|\bar m_{\Delta^c})-c\la\gamma \ln|\Delta|
\end{equation}
where in \eqref{eq:errico0} we  $F_\Delta(m_\Delta|\bar m_{\Delta^c})$ has been extended to $(-1,1)$--valued profiles in the
obvious way.

\begin{equation}
\label{eq:errico1}
|\phi_\Delta(x;m_{\Delta^c})-m_\beta|< C \; e^{-c(\beta)\gamma d(x,\Delta^c)}
\end{equation}
uniformly in $\bar m_{\Delta^c}$.
\end{lemma}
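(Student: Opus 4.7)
The plan is to construct $\phi_\Delta$ as the minimizer of a relaxed, continuous-valued variational problem associated with $F_\Delta$, derive its Euler-Lagrange equation, and use the stability of $m_\beta$ as a fixed point of $m\mapsto\tanh(\beta m)$ to obtain the exponential decay \eqref{eq:errico1}; the near-optimality bound \eqref{eq:errico0} is then extracted from the discretization rounding error and from an estimate of the long-range tail of $J_\gamma$ across $\partial\Delta$. Concretely, I would extend $F_\Delta(\cdot|\bar m_{\Delta^c})$ from $\cM_{0,\Delta}$ to profiles $m\colon\Delta\cap\ell_0\Zz\to(-1,1)$, restricting attention to those $O(\psi)$-close to $m_\beta$. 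On this set the functional is strictly convex (since $f_\beta''(m_\beta)=(1-m_\beta^2)^{-1}-\beta>0$ for $\beta>1$ by \eqref{eq:mbeta}, while the gradient terms in \eqref{def:Fm} are convex), so a unique minimizer $\phi_\Delta$ exists and satisfies the discrete Euler-Lagrange equation
\begin{equation*}
\phi_\Delta(x)=\tanh\!\Bigl(\beta\,\ell_0\sum_{y\in\ell_0\Zz} J_\gamma(|x-y|)\,(\phi_\Delta\circ\bar m)(y)\Bigr)+O(\delta_0),\qquad x\in\Delta\cap\ell_0\Zz,
\end{equation*}
where $\phi_\Delta\circ\bar m$ agrees with $\phi_\Delta$ on $\Delta$ and with $\bar m$ outside, and the error uses $\|J_\gamma\|_0=1+O(\delta_0)$.

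Writing $\chi(x)=\phi_\Delta(x)-m_\beta$, expanding $\tanh$ around $\beta m_\beta$, and using that the constraint $\eta_\Delta=+1$ forces $|\chi(y)|<\psi$ on $\Delta^c$, one obtains
\begin{equation*}
\chi(x)=\beta(1-m_\beta^2)\,\ell_0\sum_{y\in\ell_0\Zz}J_\gamma(|x-y|)\,\chi(y)+O(\chi^2)+O(\delta_0).
\end{equation*}
Since $\beta(1-m_\beta^2)<1$ (local stability of $m_\beta$ for $\beta>1$) and $J_\gamma^{(0)}$ has range $\gamma^{-1}$, a standard iteration/maximum-principle argument, as in \cite{CP} and chapter~6 of \cite{e-book}, yields \eqref{eq:errico1} with rate $c(\beta)\gamma$.

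For \eqref{eq:errico0}, the exponential decay just established implies that $\phi_\Delta$ stays $O(\psi)$-close to $m_\beta$ throughout $\Delta$, so its best discrete approximation $(\phi_\Delta)_\gamma\in\cM_{0,\Delta}$ is admissible, i.e., $\eta_\Delta((\phi_\Delta)_\gamma)=+1$. The rounding cost is of order $\gamma|\Delta|/\ell_0$ (negligible under \eqref{eq:deltachoice}), and the only remaining discrepancy comes from the long-range cross-boundary term in \eqref{def:Fm},
\begin{equation*}
\tilde\lambda\,\frac{\delta_0^2}{2\gamma}\sum_{x\in\Delta\cap\ell_0\Zz}\sum_{y\in\Delta^c\cap\ell_0\Zz}\frac{\Ii_{|x-y|>1/(2\gamma)}}{(\gamma|x-y|)^2}(\phi_\Delta(x)-\bar m(y))^2,
\end{equation*}
which, by the one-dimensional estimate $\sum_{k\ge 1/(2\gamma)}k^{-2}\le c\gamma$ together with a summation over $x\in\Delta$ that produces a $\ln|\Delta|$ factor, is bounded by $c\tilde\lambda\ln|\Delta|=c\lambda\gamma\ln|\Delta|$, matching the right-hand side of \eqref{eq:errico0}.

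The main obstacle is the interplay between the short-range Kac part $J^{(0)}$, which supplies the spectral gap used in the contraction argument, and the long-range perturbation $\tilde\lambda J^{(1)}$. One must check that $\tilde\lambda J^{(1)}$ does not destroy the contraction (it does not, because $\|\tilde\lambda J^{(1)}\|_1=O(\tilde\lambda)$ vanishes with $\gamma$) and that its $1/r^2$ tail contributes only the logarithmic correction in \eqref{eq:errico0} without spoiling the exponential decay in \eqref{eq:errico1}; this is achieved by running the contraction in an exponential weight whose rate $c(\beta)\gamma$ is small enough to be absorbed by the $1/r^2$ kernel when iterating the Euler-Lagrange equation.
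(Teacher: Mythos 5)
The paper's proof does not analyze the full functional $F_\Delta$ directly. Instead it defines $\phi_\Delta$ as the minimizer of the \emph{pure Kac} functional $F^0_\Delta$ (i.e.\ $F_\Delta$ with $\lambda=0$), cites \cite{e-book,BKMP2} for the exponential decay of that minimizer, and reduces the inequality \eqref{eq:errico0} to the two simple remarks (a) $F_\Delta \ge F^0_\Delta$ pointwise (the long-range terms are non-negative) and (b) $F_\Delta - F^0_\Delta \le c\lambda\gamma\ln|\Delta|$. The chain $\inf_{\cE_\Delta}F_\Delta \ge \inf_{\cE_\Delta}F^0_\Delta \ge F^0_\Delta(\phi_\Delta) \ge F_\Delta(\phi_\Delta)-c\lambda\gamma\ln|\Delta|$ then gives \eqref{eq:errico0}, and \eqref{eq:errico1} is inherited from the $\lambda=0$ theory. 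Your proposal is genuinely different: you take $\phi_\Delta$ to be the (constrained) minimizer of the \emph{full} functional $F_\Delta$, derive its Euler--Lagrange fixed-point equation with the kernel $J_\gamma=J^{(0)}+\tilde\lambda J^{(1)}$, and then try to prove \eqref{eq:errico1} by a weighted-contraction argument.

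That last step has a genuine gap, and it is precisely the reason the paper chooses the $\lambda=0$ minimizer. The minimizer of the full $F_\Delta$ does \emph{not} decay exponentially toward $m_\beta$. The boundary forcing transmitted through the $1/r^2$ tail to a point $x$ at distance $d=d(x,\Delta^c)$ is of order $\lambda\sum_{r>d}r^{-2}\sim\lambda/d$, so the linearized fixed-point equation $\chi(x)=\beta(1-m_\beta^2)\sum_y J_\gamma(x-y)\chi(y)+\text{(forcing)}$ produces $|\chi(x)|\sim\lambda/d$, a polynomial tail, no matter how good the contraction constant is. Concretely, the exponential-weight iteration you invoke requires $\sum_r J_\gamma(r)e^{\alpha\gamma|r|}<\infty$ for some $\alpha>0$; but $\sum_{|r|>1/(2\gamma)}\lambda\,e^{\alpha\gamma|r|}/r^2 = \lambda\gamma\int_{1/2}^\infty e^{\alpha s}s^{-2}\,ds=\infty$ for every $\alpha>0$, so the weighted operator norm is infinite and the argument does not close. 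The claim that the rate "is small enough to be absorbed by the $1/r^2$ kernel" is therefore not correct: a power-law kernel is never summable against any nontrivial exponential weight. (Having $\|\tilde\lambda J^{(1)}\|_1=O(\tilde\lambda)$ small only gives contraction in the \emph{unweighted} norm, hence boundedness of $\chi$, not decay.)

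A secondary issue: your argument for \eqref{eq:errico0} runs in the wrong direction. You discretize $\phi_\Delta$ and bound the energy of $(\phi_\Delta)_\gamma$, which gives $\inf_{\cE_\Delta}F_\Delta \le F_\Delta(\phi_\Delta)+\text{error}$ — an \emph{upper} bound on the infimum — whereas \eqref{eq:errico0} asserts a \emph{lower} bound. If $\phi_\Delta$ really minimized $F_\Delta$ over all $(-1,1)$-valued profiles, the lower bound would be trivial with error zero and this whole paragraph superfluous; the fact that it is not trivial is another sign that $\phi_\Delta$ should be tied to $F^0_\Delta$, not $F_\Delta$. To repair your proof, replace the minimization of $F_\Delta$ by minimization of $F^0_\Delta$, get the exponential decay from the Kac theory (for which your Euler--Lagrange/contraction scheme is exactly right, since $J^{(0)}$ is compactly supported), and then insert the comparison inequalities $F_\Delta\ge F^0_\Delta$ and $F_\Delta-F^0_\Delta\le c\lambda\gamma\ln|\Delta|$ — the latter is the $\ln|\Delta|$ bound you already computed for the cross-boundary long-range sum — exactly as remarks (a) and (b) do in the paper.
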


\begin{proof}[Proof of Lemma \ref{thm:errico}]
After remarking (a) and (b) below, the proof is essentially that in \cite{e-book} except for the fact that our
$J^{(0)}$ is not smooth. The technical details to adapt the proof to this
case have already been taken care in \cite{BKMP2}, see Appendix D there.

\noindent (a) For any $\bar m_{\Delta^c}(x)$, and if $F^0_\Delta(m_\Delta|\bar m_{\Delta^c})$ denotes
the functional $F_\La$ calculated for $\la=0$, one has
\begin{equation*}
F_\Delta(m_\Delta|\bar m_{\Delta^c})\ge F^0_\Delta(m_\Delta|\bar m_{\Delta^c})
\end{equation*}

\noindent (b) $|F_\Delta(m_\Delta|\bar m_{\Delta^c})-F^0_\Delta(m_\Delta|\bar m_{\Delta^c})|<c\la\gamma \ln|\Delta|$.
\end{proof}

\noindent We now state the following corollary of Lemma \ref{thm:errico}.
\begin{corol}
   \label{corol:errico}
Let us assume that  $\eta(h)=+1$ and let $\Delta=C^+_h$. Then the infimum of \eqref{eq:errico0}
is achieved on a function $\phi_\Delta$ that satisfies

\begin{equation}
\label{eq:corol-errico}
|\phi_\Delta(x)-m_\beta|<e^{-c\delta_+/3}=: \eps(\ga)\equiv\eps
\hskip 1cm \text{if } \dist(x,\Delta^c)>\ell_+/3.
\end{equation}\end{corol}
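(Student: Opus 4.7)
The plan is to read Corollary \ref{corol:errico} as a direct specialization of Lemma \ref{thm:errico} to the case $\Delta = C^+_h$ (a single $\ell_+$-block) under the hypothesis $\eta(h)=+1$. Since $|C^+_h|=\ell_+=\delta_+/\gamma$ and the hypothesis $\eta(h)=+1$ means precisely that we are minimizing over $\cE_\Delta(\eta_\Delta=+1)$, the Lemma applies verbatim. The minimizer (or a near-minimizer) $\phi_\Delta(\cdot;m_{\Delta^c})$ is produced by the Lemma, and by \eqref{eq:errico1} it satisfies
\[
|\phi_\Delta(x)-m_\beta| < C\, e^{-c(\beta)\,\gamma\,\dist(x,\Delta^c)}
\]
uniformly in the boundary profile $m_{\Delta^c}$.

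The key (and essentially only) computation is: for $x\in\Delta$ with $\dist(x,\Delta^c)>\ell_+/3$ one has
\[
\gamma\,\dist(x,\Delta^c) \;>\; \gamma\cdot\frac{\ell_+}{3} \;=\; \frac{\delta_+}{3},
\]
so substitution into the exponential bound gives
\[
|\phi_\Delta(x)-m_\beta| \;<\; C\, e^{-c(\beta)\delta_+/3}.
\]
This is already of the required form up to absorbing the prefactor $C$ into the exponent, which is the only nontrivial bookkeeping step.

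To absorb $C$, I would use the choice of scales in \eqref{eq:deltachoice}: since $\delta_+=\gamma^{-1/2}/(\ln\gamma^{-1})^3\to\infty$ as $\gamma\to 0$, one has $C\le e^{(c(\beta)/3-c)\delta_+/3}$ for any $0<c<c(\beta)$ provided $\gamma$ is small. Renaming $c(\beta)/3$ (minus a small loss) as the constant $c$ appearing in \eqref{eq:corol-errico} yields
\[
|\phi_\Delta(x)-m_\beta| \;<\; e^{-c\delta_+/3} \;=:\; \varepsilon(\gamma),
\]
which is the claim.

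The only genuine difficulty lies in Lemma \ref{thm:errico} itself (the exponential decay of the minimizer towards the mean-field value), which is handled by the soft-spin mean-field analysis of \cite{e-book} (using ingredients (a) and (b) in its proof to reduce to the $\lambda=0$ case) together with the adaptation in \cite{BKMP2} needed because $J^{(0)}$ is only an indicator rather than a smooth kernel. Granted that, the Corollary is a one-line substitution of the scale $\delta_+/\gamma$ into \eqref{eq:errico1}, followed by a trivial reabsorption of the prefactor.
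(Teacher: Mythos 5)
Your proof is correct and is essentially the intended derivation: the paper states the Corollary without separate argument, precisely because it is the direct specialization of Lemma \ref{thm:errico} to $\Delta=C^+_h$, with the substitution $\gamma\,\dist(x,\Delta^c) > \gamma\ell_+/3 = \delta_+/3$ in \eqref{eq:errico1} and the prefactor $C$ absorbed into the exponent since $\delta_+\to\infty$ as $\gamma\to 0$. The only minor slip is the stray factor $/3$ in your condition "$C\le e^{(c(\beta)/3-c)\delta_+/3}$ for any $0<c<c(\beta)$"; the correct condition is $C\le e^{(c(\beta)-c)\delta_+/3}$ (equivalently, require $0<c<c(\beta)$ and take $\gamma$ small), but this does not affect the argument.
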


\vskip 0.5cm \noindent
\section{Proof of (\ref{eq:triangulo})}
\label{app:triangulo}

\noindent Recall that $T$ is the smallest triangle in the configuration $\und \Ga(\sigma)=\{\Ga\}$.

\begin{proof} It follows from equation \eqref{def:Fauxiliar} and the symmetry of $f_\beta(m)$ that:
\begin{eqnarray*}
F_{ T,\Delta^c}(m\circ\bar m)-F_{ T,\Delta^c}(m^*\circ\bar m)
&=&-\gamma(\ell_0)^2\sumtwo{x\in \ell_0\Zz\cap { T}}{y\in \ell_0\Zz\setminus {\Delta}}
J_\gamma(|x-y|)m(x) m(y),
\end{eqnarray*}
and by Lemma \ref{thm:interazioneAB} below we can write:

\begin{eqnarray*}
F_{ T,\Delta^c}(m\circ\bar m)-F_{ T,\Delta^c}(m^*\circ\bar m)\ge
-\gamma (\ell_-)^2\sumtwo{u\in \ell_-\Zz\cap { T}}{v\in \ell_-\Zz\setminus {\Delta}}
J_\gamma(|u-v|)m^{\ell_-}(u) m^{\ell_-}(v)-8\tilde\la K\delta_-.
\end{eqnarray*}

Under the hypothesis of Lemma \ref{thm:k},
$|m^{\ell_-}(u) +s\cdot m_\beta| <\psi$ for $u\in  T$, where $s=\sign(T)$. On the other hand,
\begin{eqnarray*}
  m^{\ell_-}(v)
  \begin{cases}
    \in (s\cdot m_\beta-\psi,  s\cdot m_\beta+\psi) & \text{if }  v\in I(T) \\
    \in \cM_- & \text{if }  v\notin I(T)
  \end{cases}
\end{eqnarray*}
from which \eqref{eq:triangulo} follows easily.
\end{proof}

\begin{lemma}
   \label{thm:interazioneAB}
There exist a constant $K$ so that for every $A$, $B$ disjoint $\ell_+$--measurable intervals on $\Zz$ and every
$m(\cdot) \in \mathcal{M}_{0,\mathbb{Z}}$:
\begin{eqnarray}
\label{eq:interazioneAB}
&&\hskip-1cm\nn
\bigg|(\ell_0)^2\sumtwo{x\in A\cap \ell_{0}\Zz}{y\in B\cap\ell_{0}\Zz}{J}_\gamma(|x-y|)m(x)m(y)
-(\ell_-)^2\sumtwo{u\in A\cap \ell_{-}\Zz}{v\in B\cap \ell_{-}\Zz} J_\gamma(|u-v|)
m^{\ell_{-}}(u)m^{\ell_{-}}(v)\bigg|
\\
&&\hskip3cm
\le\ell_-[4\delta_-\Ii_{[d(A,B)=1]}+8\la K]
\end{eqnarray}
\end{lemma}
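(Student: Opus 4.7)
\emph{Proof plan.} The strategy is to bring both sums to the common fine grid $\ell_0\Zz$ and control the error produced by replacing $|x-y|$ with $|u(x)-v(y)|$ in the kernel, where $u(x)\in\ell_-\Zz$ denotes the unique lattice point with $x\in C_{u(x)}^-$ (and similarly $v(y)$ for $y$). The definition of $m^{\ell_-}$ in \eqref{def:m-si} gives the telescoping identity $\ell_-\,m^{\ell_-}(u)=\sum_{x\in C_u^-\cap\ell_0\Zz}\ell_0\,m(x)$, whence
\[
(\ell_-)^2\sum_{u,v}J_\ga(|u-v|)m^{\ell_-}(u)m^{\ell_-}(v)\;=\;(\ell_0)^2\sum_{x,y}J_\ga(|u(x)-v(y)|)m(x)m(y).
\]
Thus the quantity whose modulus we must bound is
\[
(\ell_0)^2\sum_{x\in A\cap\ell_0\Zz,\;y\in B\cap\ell_0\Zz}\bigl[J_\ga(|x-y|)-J_\ga(|u(x)-v(y)|)\bigr]m(x)m(y),
\]
and using $|m(\cdot)|\le 1$ together with $\bigl||x-y|-|u(x)-v(y)|\bigr|\le 2\ell_-$ it suffices to control the kernel discrepancy summed over the pairs.

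I would then decompose $J_\ga=J_\ga^{\rm s}+J_\ga^{\rm l}$ with $J_\ga^{\rm s}(r)=\ga\,\Ii_{[|r|\le (2\ga)^{-1}]}$ and $J_\ga^{\rm l}(r)=\la r^{-2}\Ii_{[|r|>(2\ga)^{-1}]}$, and treat the two parts separately. For the long range part, on pairs where both distances lie above $(2\ga)^{-1}$, the mean value theorem applied to $r\mapsto 1/r^2$ gives the pointwise bound $|J_\ga^{\rm l}(|x-y|)-J_\ga^{\rm l}(|u(x)-v(y)|)|\le C\la\ell_-/|x-y|^3$. A Riemann-sum comparison then bounds the long range part of the discrepancy by a constant times $\la\ell_-\int_A\int_B|x-y|^{-3}\,dx\,dy$; with the cut-off $|x-y|>(2\ga)^{-1}$ in force, the double integral is bounded uniformly in $|A|,|B|$, which accounts for the $8\la K\ell_-$ piece of \eqref{eq:interazioneAB}.

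For the short range part, $J_\ga^{\rm s}$ being a step function, its difference vanishes off the \emph{transitional} set of pairs for which $|x-y|$ and $|u(x)-v(y)|$ lie on opposite sides of the threshold $(2\ga)^{-1}$. If $d(A,B)\ge \ell_+$ then $|x-y|\ge \ell_+>(2\ga)^{-1}+2\ell_-$ throughout, no transitional pair exists, and the short range part drops out, leaving only the long range contribution. In the adjacent case $d(A,B)=1$ I would enumerate the transitional pairs at the $\ell_-$-block-pair level: only block pairs $(u,v)$ whose centers satisfy $\bigl||u-v|-(2\ga)^{-1}\bigr|\le \ell_-$ can contain a transitional pair, and each such block pair contributes at most $\ga(\ell_-)^2=\delta_-\ell_-$ to the overall difference. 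A careful count of these block pairs, exploiting the one-dimensional geometry of $A$ and $B$, then yields the extra $4\delta_-\ell_-$ term.

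The main obstacle is exactly this last step: the short range kernel has a jump discontinuity of size $\ga$ at $(2\ga)^{-1}$, and the accounting must be sharp enough to keep the short range contribution at the announced order. Grouping transitional pairs by their host $\ell_-$-block pair — rather than counting $\ell_0$-pairs individually — confines the effect of the discontinuity to a layer of thickness $\ell_-$ around $(2\ga)^{-1}$, so that the number of block pairs involved and the per-block contribution combine to give the stated bound. The long range estimate is then a routine regularity argument for the decaying kernel $\la/r^2$.
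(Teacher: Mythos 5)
Your plan coincides with the paper's proof in all essentials: bring the coarse sum to the $\ell_0$-grid via the telescoping identity $\ell_-m^{\ell_-}(u)=\sum_{x\in C_u^-\cap\ell_0\Zz}\ell_0 m(x)$, reduce to the pointwise discrepancy $\Delta J_\gamma(x,y)=J_\gamma(|x-y|)-J_\gamma(|u-v|)$ with $u,v$ the $\ell_-$-blocks hosting $x,y$, split into the step part and the $\lambda/r^2$ tail, bound the tail by the mean-value theorem, and localize the step-function discrepancy to a layer of width $O(\ell_-)$ around $(2\gamma)^{-1}$. The paper's own proof consists of exactly this rewriting followed by the single pointwise estimate $|\Delta J_\gamma(x,y)|\le\gamma\,\Ii_{[||u-v|-1/(2\gamma)|\le 2\ell_-]}+8\lambda\ell_-|u-v|^{-3}\Ii_{[|u-v|>1/(2\gamma)-2\ell_-]}$, with no further detail, so your sketch is at least as explicit as the original.

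The one place to push back is the short-range count you defer to a ``careful count.'' When $d(A,B)=1$ and both intervals are long, the number of $\ell_-$-block pairs $(u,v)$ with $|u-v|$ within $2\ell_-$ of $(2\gamma)^{-1}$ is \emph{not} $O(1)$: for each of the $O(1)$ admissible separations $r\approx(2\gamma)^{-1}$ there are about $r/\ell_-\approx(2\delta_-)^{-1}$ choices of $u$ straddling the shared endpoint of $A$ and $B$. Multiplying by your per-block-pair contribution $\gamma\ell_-^2=\delta_-\ell_-$ gives a total of order $\ell_-$, not $\delta_-\ell_-$; even discounting for the fact that only a fraction of the $(x,y)$ within a transitional block pair actually straddle the threshold, the outcome stays $\Theta(\ell_-)$. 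So the accounting as you set it up does not reproduce the announced $4\delta_-\,\Ii_{[d(A,B)=1]}$ coefficient (nor is it transparent from the paper's one-line bound). This has no downstream consequence here --- the lemma is invoked only in the proof of \eqref{eq:triangulo} with $d(A,B)\ge 2\ell_+$, where this indicator vanishes and only the $8\lambda K\ell_-$ term matters --- but you should either sharpen the count or acknowledge that the short-range contribution is of order $\ell_-$ and absorb it into the additive constant $K$.
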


\begin{proof}
The l.h.s. of (\ref{eq:interazioneAB}) can be written as
\begin{equation*}
\big|(\ell_0)^2\sumtwo{u\in \ell_-\Zz\cap {A}}{v\in \ell_-\Zz\cap {B}}
\sumtwo{x\in \ell_0\Zz\cap{C^-_u}}{y\in \ell_0\Zz\cap{C^-_v}}
\Delta J_\gamma(x,y)m(x)m(y)\big|
\end{equation*}
where for $x \in C^-_u, y\in C^-_v$, $J_\gamma(|x-y|)-J_\gamma(|u-v|)$. Direct calculation shows that:
\begin{eqnarray*}
|\Delta J_\gamma(x,y)| \le \gamma \Ii_{[||u-v|-1/(2\gamma)|\le 2\ell_-]}+\frac{8\la \ell_-\Ii_{[|u-v|>1/(2\gamma)-2\ell_-]}}{|u-v|^{3}}
\end{eqnarray*}
from where (\ref{eq:interazioneAB}) follows.
\end{proof}
\end{appendices}
\end{document}